\newcommand{\CONF}{\textnormal{CONF}}
\renewcommand{\a}{\mathbf{a}}
\renewcommand{\b}{\mathbf{b}}
\newcommand{\q}{\mathbf{q}}
\newcommand{\s}{\mathbf{s}}
\newcommand{\x}{\mathbf{x}}
\newcommand{\y}{\mathbf{y}}
\newcommand{\z}{\mathbf{z}}
\newcommand{\A}{\mathcal{A}}
\newcommand{\B}{\mathcal{B}}
\newcommand{\C}{\mathcal{C}}
\renewcommand{\H}{\mathcal{H}}
\renewcommand{\P}{\mathcal{P}}
\newcommand{\Q}{\mathcal{Q}}
\newcommand{\R}{\mathcal{R}}
\newcommand{\W }{\mathcal{W}}
\newcommand{\WQ }{\overline{\mathcal{W}}}
\newcommand{\X}{\mathcal{X}}
\newcommand{\Y}{\mathcal{Y}}
\newcommand{\Z}{\mathcal{Z}}
\newcommand{\U}{\mathcal{U}}
\renewcommand{\S}{\mathcal{S}}
\newcommand{\SB}{\P(\S)}
\newcommand{\EE}{\mathbb{E}}
\newcommand{\PP}{\mathbb{P}}
\newcommand{\eps}{\varepsilon}
\newtheorem{thm}{Theorem}
\newtheorem{lem}[thm]{Lemma}
\theoremstyle{definition}
\newtheorem{defn}[thm]{Definition}
\theoremstyle{remark}
\newtheorem{rem}{Remark}
\newtheorem{ex}{Example}
\title{The Arbitrarily Varying Multiple-Access Channel with Conferencing Encoders}
\author{Moritz~Wiese,~\IEEEmembership{Student~Member,~IEEE,}
        and Holger~Boche,~\IEEEmembership{Fellow,~IEEE}
\thanks{M. Wiese and H. Boche are with the Lehrstuhl f\"{u}r Theoretische Informationstechnik, Technische Universit\"{a}t M\"unchen, Munich, Germany (e-mail: \{wiese,boche\}@tum.de)}
\thanks{The results of this paper were presented at the 2011 IEEE International Symposium on Information Theory (ISIT '11), St. Petersburg, Russia.}
\thanks{This work was supported by the German Ministry of Education and Research (BMBF) under
Grant 01BQ1050 and the DFG COIN Project.}
\thanks{Copyright (c) 2012 IEEE. Personal use of this material is permitted.  However, permission to use this material for any other purposes must be obtained from the IEEE by sending a request to pubs-permissions@ieee.org.}}
\begin{document}

\maketitle

\begin{abstract}
  We derive the capacity region of arbitrarily varying multiple-access channels with conferencing encoders for both deterministic and random coding. For a complete description  it is sufficient that one conferencing capacity is positive. We obtain a dichotomy: either the channel's deterministic capacity region is zero or it equals the two-dimensional random coding region. We determine exactly when either case holds. We also discuss the benefits of conferencing. We give the example of an AV-MAC which does not achieve any non-zero rate pair without encoder cooperation, but the two-dimensional random coding capacity region if conferencing is possible. Unlike compound multiple-access channels, arbitrarily varying multiple-access channels may exhibit a discontinuous increase of the capacity region when conferencing in at least one direction is enabled. 
\end{abstract}

\begin{IEEEkeywords}
  Base station cooperation, Channel uncertainty, Compound channels, Arbitrarily Varying Channels, Conferencing Encoders.
\end{IEEEkeywords}

\section{Introduction}

Multiple-Access Channels (MACs) and similar multi-sender channels with conferencing encoders have attracted attention recently due to the inclusion of base-station cooperation methods in standards for future wireless systems \cite{BLW, MYK, SGPGS, Wig}. The original conferencing protocol for the discrete memoryless MAC is due to Willems \cite{Wi1,Wi2}. The conferencing MAC with imperfect channel state information was modeled as a compound MAC with conferencing encoders and considered in \cite{WBBJ11}, a different model for channel state uncertainty is given in \cite{PSSB}.

This paper covers a very high degree of channel uncertainty in MACs: the channel states may vary arbitrarily over time. The task is to use coding to enable reliable communication for every possible state sequence. The corresponding information-theoretic channel model is the Arbitrarily Varying MAC (AV-MAC). The random coding capacity region of the AV-MAC without encoder cooperation was determined in \cite{J}. Building on this result, the deterministic coding capacity region of some AV-MACs without cooperation was determined in \cite{AC}. In general, it is still open. We will use the ``robustification'' and ``elimination of correlation'' techniques developed by Ahlswede in \cite{A1,A2}, and partly already used in \cite{J} in a multi-user setting, in order to characterize both the deterministic and random coding capacity regions of any AV-MAC with conferencing encoders, i.e.\ of any AV-MAC where encoding is done using a Willems conference as in \cite{Wi1,Wi2} with at least one positive conferencing capacity. 
Thus none of the techniques we apply in this paper is completely new, but in contrast to the non-conferencing situation, they allow for the complete solution of the problems considered here. The rather general ``robustification'' technique establishes the random coding capacity region of the AV-MAC with conferencing encoders. Both single- and multi-user arbitrarily varying channels are special in that random coding as commonly used in information theory does not yield the same results as deterministic coding. This shows that common randomness shared at the senders and the receiver is an important additional resource. There is a dichotomy: either reliable communication at any non-zero rate pair is impossible with the application of deterministic codes, or the deterministic capacity region coincides with the random coding capacity region, which then is two-dimensional. In the latter case, one needs the non-standard ``elimination of correlation'' \cite{A1} for derandomization. It is a two-step protocol which 
achieves the random coding capacity region if this is possible.

The combination of the elimination technique with conferencing proves to be very fruitful. Here lies the main difference between the AV-MAC with and without conferencing. One can show that there exist channels which only achieve the zero rate pair without transmitter cooperation, but where derandomization using the elimination technique is possible if the transmitters may have a conference. The reason for this is symmetrizability. This can be interpreted in terms of an adversary knowing the channel input symbols and randomizing over the channel states. There are three kinds of symmetrizability for multiple-access channels. The capacity region of the AV-MAC without conferencing equals $\{(0,0)\}$ if all three symmetrizability conditions are satisfied. In contrast, the elimination of correlation technique works if the AV-MAC with Willems conferencing encoders does not satisfy the conditions for the first of the three kinds of symmetrizabilities. The two others do not matter. By conferencing with at least one 
positive conferencing capacity, the AV-MAC gets closer to a single-sender arbitrarily varying channel where only one symmetrizability condition exists \cite{CN}. This induced change of the channel structure is also reflected in the counter-intuitive fact that conferencing with rates tending to zero in blocklength can enlarge the capacity region. The adversary interpretation of symmetrizability highlights the importance of the AV-MAC for the theory of information-theoretic secrecy: if a channel is symmetrizable, an adversary can completely prevent communication.

The paper is organized as follows: the next section is devoted to the formalization of the channel model and the coding problems. We present the main theorems and several auxiliary coding results. The direct parts of the random and deterministic coding theorems are solved in Section \ref{sect:ach}. Section \ref{sect:conv} gives the converses of the random and deterministic AV-MAC coding theorems. Section \ref{sect:concl} concludes the paper with a discussion. In particular, the gains of conferencing are analyzed there.

\textit{Notation:} In the information-theoretic setting, we also use the terms ``encoders'' for the senders and ``decoder'' for the receiver. For any positive integer $m$, we write $[1,m]$ for the set $\{1,\ldots,m\}$. For a set $A\subset\X$, we denote its complement by $A^c:=\X\setminus A$. For real numbers $x$ and $y$, we set $x\wedge y:=\min(x,y)$ and $x\vee y:=\max\{x,y\}$. $\mathcal{P}(\X)$ denotes the set of probability measures on the discrete set $\X$.

\section{Problem Setting}

\subsection{The Main Coding Problems}

Let $\X,\Y,\Z$ be finite alphabets, let $\S$ be another finite set. For every $s\in\S$, let a stochastic matrix 
\[
  W(z\vert x,y\vert s):\quad(x,y,z)\in\X\times\Y\times\Z
\]
be given with inputs from $\X\times\Y$ and outputs from $\Z$. $\S$ is to be interpreted as the set of channel states. We set
\[
  \W :=\{W(\,\cdot\,\vert\,\cdot\,,\,\cdot\,\vert s):s\in\S\}.
\]
We assume that the channel state varies arbitrarily from channel use to channel use. Given words $\x=(x_1,\ldots,x_n)\in\X^n$, $\y=(y_1,\ldots,y_n)\in\Y^n$, and $\z=(z_1,\ldots,z_n)\in\Z^n$, the probability that $\z$ is received upon transmission of $\x$ and $\y$ depends on the sequence $\s\in\S^n$ of channel states attained during the transmission. It equals
\begin{equation}
  W^n(\z\vert\x,\y\vert\s):=\prod_{m=1}^nW(z_m\vert x_m,y_m\vert s_m).
\end{equation}

\begin{defn}
  The set of stochastic matrices
\[
  \{W^n(\,\cdot\,\vert\,\cdot\,,\,\cdot\,\vert\s):\s\in\S^n,n=1,2,\ldots\}
\]
is called the \emph{Arbitrarily Varying Multiple Access Channel (AV-MAC)} determined by $\W $.
\end{defn}

In the traditional non-cooperative encoding schemes used for multiple-access channels, none of the senders has any information about the other sender's message. The goal here is to characterize the capacity region of the AV-MAC achievable when limited information can be exchanged between the encoders. We use Willems conferencing for this exchange \cite{Wi1,Wi2}. If the encoders' message sets are $[1,M_1]$ and $[1,M_2]$, respectively, then this can be described as follows. Let positive integers $V_1$ and $V_2$ be given which can be written as products
\[
  V_\nu=V_{\nu,1}\cdots V_{\nu,I}
\]
for some positive integer $I$ which does not depend on $\nu$. A pair $(c_1,c_2)$ of Willems conferencing functions is determined in an iterative manner via sequences of functions $c_{1,1},\ldots,c_{1,I}$ and $c_{2,1},\ldots,c_{2,I}$. The function $c_{1,i}$ describes what encoder 1 tells the other encoder in the $i$-th conferencing iteration given the knowledge accumulated so far at encoder 1. Thus in general, using the notation 
\[
  \bar\nu:=\begin{cases}
             1&\text{if }\nu=2,\\
             2&\text{if }\nu=1,
           \end{cases}
\]
these functions satisfy for $\nu=1,2$ and $i=2,\ldots,I$
\begin{align*}
  c_{\nu,1}&:[1,M_\nu]\rightarrow[1,V_{\nu,1}],\\
  c_{\nu,i}&:[1,M_\nu]\times[1,V_{\bar\nu,1}]\times\ldots\times[1,V_{\bar\nu,i-1}]\rightarrow[1,V_{\nu,i}].
\end{align*}
These functions recursively define other functions 
\begin{align*}
  c_{\nu,1}^*&:[1,M_\nu]\rightarrow[1,V_{\nu,1}],\\
  c_{\nu,i}^*&:[1,M_1]\times[1,M_2]\rightarrow[1,V_{\nu,i}]
\end{align*}
by
\begin{align*}
  c_{1,1}^*(j)&=c_{1,1}(j),\\
  c_{2,1}^*(k)&=c_{2,1}(k),\\
  c_{1,i}^*(j,k)&=c_{1,i}\bigl(j,c_{2,1}^*(k),\ldots,c_{2,i-1}^*(j,k)\bigr),\\
  c_{2,i}^*(j,k)&=c_{2,i}\bigl(k,c_{1,1}^*(j),\ldots,c_{1,i-1}^*(j,k)\bigr).
\end{align*}
Then we set  
\[
  c_\nu(j,k):=\bigl((c_{\nu,1}^*(j,k),\ldots,c_{\nu,I}^*(j,k)\bigr).
\]
Observe that given a message pair $(j,k)$, the conferencing outcome $(c_1(j,k),c_2(j,k))$ is known at both transmitters. If all conferencing protocols were allowed, the encoders could inform each other precisely about their messages, so this would turn the MAC into a single-sender channel. Thus for nonnegative numbers $C_1,C_2$, if conferencing is used in a blocklength-$n$ code, Willems introduces the restrictions
\begin{equation}\label{confcap}
  \frac{1}{n}\log V_\nu\leq C_\nu.
\end{equation}
$C_1,C_2$ are called the conferencing capacities. Having introduced Willems conferencing, we can now define the codes we are going to consider. 

\begin{defn}\label{defn_code}
\begin{enumerate}
\item Let $n,M_1,M_2$ be positive integers and $C_1,C_2\geq0$. A \emph{deterministic code$_\CONF(n,M_1,M_2,C_1,C_2)$ with blocklength $n$, codelength pair $(M_1,M_2)$, and conferencing capacities $C_1,C_2$} is given by functions $c_1,c_2,f_1,f_2,\Phi$. Here, $(c_1,c_2)$ is a Willems conferencing protocol satisfying \eqref{confcap}. $f_1,f_2$ are the encoding functions
\begin{align*}
  f_1&:[1,M_1]\times[1,V_2]\rightarrow\X^n,\\
  f_2&:[1,M_2]\times[1,V_1]\rightarrow\Y^n.
\end{align*}
The \emph{decoding function} $\Phi$ is a function
\[
  \Phi:\Z^n\rightarrow[1,M_1]\times[1,M_2].
\]
\item A \emph{random code$_\CONF(n,M_1,M_2,C_1,C_2)$ with blocklength $n$, codelength pair $(M_1,M_2)$, and conferencing capacities $C_1,C_2$} is a pair $(C,G)$, where $C=\{C(\gamma):\gamma\in\Gamma\}$ is a finite family of deterministic codes$_\CONF(n,M_1,M_2,C_1,C_2)$, and where $G$ is a random variable taking values in $\Gamma$.
\end{enumerate}
\end{defn}

\begin{figure}
  \begin{center}
    \includegraphics[width=\columnwidth]{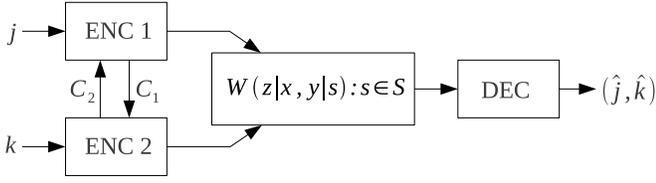}
  \end{center}
  \caption{The AV-MAC $\W $ with conferencing encoders.}\label{fig:confbild}
\end{figure}

Note that a code$_\CONF(n,M_1,M_2,0,0)$ is a traditional MAC code without conferencing. An AV-MAC together with the above coding procedure is called an \emph{AV-MAC with conferencing encoders}, see Fig.\ \ref{fig:confbild}. A code$_\CONF$ $(n,M_1,M_2,C_1,C_2)$ defined by $(c_1,c_2,f_1,f_2,\Phi)$ gives rise to a family 
\begin{equation}\label{codeform}
  \{(\x_{jk},\y_{jk},F_{jk}):(j,k)\in[1,M_1]\times[1,M_2]\},
\end{equation}
where
\begin{align*}
  &\x_{jk}:=f_1(j,c_2(j,k))\in\X^n,\\&\y_{jk}:=f_2(k,c_1(j,k))\in\Y^n,\\&F_{jk}:=\Phi^{-1}\{(j,k)\}\subset\Z^n.
\end{align*}
If the message pair $(j,k)$ is present at the senders, the \emph{codewords} $\x_{jk}$ and $\y_{jk}$ are sent. The decoding sets $\{F_{jk}:(j,k)\in[1,M_1]\times[1,M_2]\}$ give a partition of $\Z^n$ which, just like $\Phi$, assigns to every channel output $\z\in\Z^n$ a message pair which the receiver will decide for upon reception of $\z$.

Note that every family \eqref{codeform}, where the $F_{jk}$ are disjoint, together with a Willems conferencing protocol $(c_1,c_2)$ satisfying \eqref{confcap} defines a code$_\CONF(n,M_1,M_2,C_1,C_2)$ if
\begin{align}
  \x_{jk}&=\x_{jk'}\qquad\text{if }c_2(j,k)=c_2(j,k'),\label{conf1}\\
  \y_{jk}&=\y_{j'k}\qquad\text{if }c_1(j,k)=c_1(j',k).\label{conf2}
\end{align}
Thus a code$_\CONF$ can equivalently be defined by a family \eqref{codeform} together with a conferencing protocol $(c_1,c_2)$ such that \eqref{conf1} and \eqref{conf2} are satisfied. We will often refer to a code$_\CONF$ using the description \eqref{codeform}, and usually without specifying the corresponding conferencing protocol by just assuming that there is one.

The first example of this convention is encountered in our definition of the average error, where the explicit form of the conferencing protocol is irrelevant. 

\begin{defn}
\begin{enumerate}
  \item A code$_\CONF(n,M_1,M_2,C_1,C_2)$ defining a family \eqref{codeform} has an average error probability less than $\lambda\in(0,1)$ if
\[
  \frac{1}{M_1M_2}\sum_{j,k}W^n(F_{jk}^c\vert\x_{jk},\y_{jk}\vert \s)\leq\lambda\quad\text{for all }\s\in\S^n.
\]
  \item Let a random code$_\CONF(n,M_1,M_2,C_1,C_2)$ with the form $(C,G)$ be given. Assume that the deterministic code$_\CONF$ $C(\gamma)$ has the form
\[
  \{(\x_{jk}^\gamma,\y_{jk}^\gamma,F_{jk}^\gamma):(j,k)\in[1,M_1]\times[1,M_2]\}.
\]
Then for any $\s\in\S^n$, define
\begin{equation}\label{P_edefn}
  P_e(C(\gamma)\vert\s):=\frac{1}{M_1M_2}\sum_{j,k}W^n((F_{jk}^\gamma)^c\vert\x_{jk}^\gamma,\y_{jk}^\gamma\vert\s)
\end{equation}
to be the average error incurred by $C(\gamma)$ under channel conditions $\s$. Assume that $G$ has distribution $p_G$. We say that the random code$_\CONF$ defined by $(C,G)$ has an average error smaller than $\lambda\in(0,1)$ if 
\[
  \sum_{\gamma\in\Gamma}P_e(C(\gamma)\vert\s)p_G(\gamma)\leq\lambda\qquad\text{for every }\s\in\S^n. 
\]
\end{enumerate}
\end{defn}
 
This means that uniformly for every interfering sequence, transmission using this code is reliable up to the average error level $\lambda$. The possible state sequences are not weighted by any probability measure. One can interpret this in a communication setting with an adversary who knows which words $\x,\y$ are input into the channel by the senders and then can choose any state sequence $\s\in\S^n$ in order to obstruct the transmission of $\x$ and $\y$. The goal of the encoders then is to enable reliable communication no matter what sequence $\s$ the bad guy might use.

The concept of achievability of a rate pair is the usual one except that conferencing codes$_\CONF$ are allowed for code construction. 

\begin{defn}
  A rate pair $(R_1,R_2)$ is \emph{achievable by the AV-MAC with conferencing encoders and conferencing capacities $C_1,C_2$ under deterministic/random coding} if for every $\lambda\in(0,1)$ and for every $\eps>0$, for $n$ sufficiently large, there is a deterministic/random code$_\CONF(n,M_1,M_2,C_1,C_2)$ with 
\[
  \frac{1}{n}\log M_\nu\geq R_\nu-\eps\quad(\nu=1,2),
\]
and with an average error smaller than $\lambda$. The set of achievable rates under deterministic/random coding is called the \emph{deterministic/random capacity region of the AV-MAC with conferencing encoders and conferencing capacities $C_1,C_2$} and denoted by $\C_d(\S,C_1,C_2)$ (for deterministic coding) and $\C_r(\S,C_1,C_2)$ (for random coding).
\end{defn}

We can now formulate the coding problems which are at the center of this work: 
\begin{center}
  \textit{Characterize the deterministic/random capacity regions $\C_d(\S,C_1,C_2)$ and $\C_r(\S,C_1,C_2)$ of the AV-MAC with conferencing capacities $C_1,C_2$.}
\end{center}

Of course, the main focus is on the deterministic capacity region $\C_d(\S,C_1,C_2)$ as the random capacity region $\C_r(\S,C_1,C_2)$ requires common randomness shared at the encoders and the receiver. For both $\C_d(\S,C_1,C_2)$ and $\C_r(\S,C_1,C_2)$, we need to consider the convex hull $\WQ$ of $\W $. It is parametrized by the set of probability distributions $\P(\S)$ on $\S$, so one can regard $\P(\S)$ as its ``state space''. The stochastic matrix from $\WQ$ assigned to the ``state'' $q\in\P(\S)$ is the matrix with inputs from $\X\times\Y$ and outputs from $\Z$ having the form
\begin{multline*}
  W(z\vert x,y\vert q):=\sum_{s\in\S}W(z\vert x,y\vert s)q(s),\\(x,y,z)\in\X\times\Y\times\Z.
\end{multline*}
We have $\W\subset\WQ$ by identifying $s\in\S$ with the Dirac measure $\delta_s\in\SB$, so that $W(\,\cdot\,\vert\,\cdot\,,\,\cdot\,\vert s)=W(\,\cdot\,\vert\,\cdot\,,\,\cdot\,\vert\delta_s)$. 

Next we define a set of rates $\C^*(\S,C_1,C_2)$. Let $\Pi$ be the set consisting of probability distributions $p\in\P(\U\times\X\times\Y)$, where $\U$ ranges over the finite subsets of the integers and where $p$ has the form
\[
  p(u,x,y)=p_0(u)p_1(x\vert u)p_2(y\vert u).
\]
To each $p\in\Pi$ and $q\in\SB$ one can associate a generic random vector $(U,X,Y,Z_{ q})$ with distribution
\begin{equation}\label{pbars}
 p_{ q}(u,x,y,z)=p(u,x,y)W(z\vert x,y\vert q).
\end{equation}
In this way every $p\in\Pi$ and $ q\in\SB$ define a set $\R(p,q,C_1,C_2)$ consisting of those pairs $(R_1,R_2)$ of nonnegative real numbers which satisfy
\begin{align*}
                R_1&\leq I(Z_{ q};X\vert Y,U)+C_1,\\
                R_2&\leq I(Z_{ q};Y\vert X,U)+C_2,\\
                R_1+R_2&\leq (I(Z_{ q};X,Y\vert U)+C_1+C_2)
                         \wedge I(Z_{ q};X,Y).
\end{align*}
Then set
\[
  \C^*(\S,C_1,C_2):=\bigcup_{p\in\Pi}\;\bigcap_{ q\in\SB}\R(p, q,C_1,C_2).
\]

\begin{thm}\label{thmconfrand}
  For the AV-MAC determined by $\W$ with conferencing capacities $C_1,C_2\geq0$, we have
\[
  \C_r(\S,C_1,C_2)=\C^*(\S,C_1,C_2).
\]
More precisely, for every $(R_1,R_2)\in\C^*(\S,C_1,C_2)$ and every $\eps>0$ there is a $\zeta>0$ and a sequence $(C_n,G_n)$ of random codes$_\CONF(n,M_1^{(n)},M_2^{(n)},C_1,C_2)$ with an average error at most $2^{-n\zeta}$ such that 
        \[
          \frac{1}{n}\log M_\nu^{(n)}\geq R_\nu-\eps\quad(\nu=1,2).
        \]
Additionally the $(C_n,G_n)$ can be chosen such that for every $n$, the constituent deterministic codes$_\CONF$ share the same non-iterative Willems conferencing protocol $(c_1^{(n)},c_2^{(n)})$ given by
\begin{equation}\label{conffunct}
  c_\nu^{(n)}:[1,M_\nu^{(n)}]\rightarrow[1,V_\nu^{(n)}]\qquad(\nu=1,2).
\end{equation}
\end{thm}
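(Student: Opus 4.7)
My approach is to reduce Theorem \ref{thmconfrand} to the capacity theorem for the compound MAC with conferencing encoders on the channel family $\WQ$ established in \cite{WBBJ11}, and then lift that compound-channel result to the AV-MAC by means of Ahlswede's ``robustification'' technique \cite{A1,A2}. The reduction is natural because for a fixed $q\in\SB$ the region $\R(p,q,C_1,C_2)$ is precisely Willems' MAC-with-conferencing region for the single memoryless channel $W(\,\cdot\,\vert\,\cdot\,,\,\cdot\,\vert q)$, so $\C^*(\S,C_1,C_2)$ coincides with the compound-channel capacity region for the class $\WQ$.

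\textbf{Direct part.} Fix $(R_1,R_2)\in\C^*(\S,C_1,C_2)$ and $\eps>0$. Using Willems' coding scheme for the compound MAC with conferencing -- a non-iterative conference of the form \eqref{conffunct} transmitting the auxiliary variable $U$, followed by codeword generation with $p_1(x\vert u)$ and $p_2(y\vert u)$ and joint-typicality decoding -- standard random-coding and type-counting arguments produce a deterministic code$_\CONF$ $C_0$ of rates at least $R_\nu-\eps$ whose $q^n$-averaged error satisfies
\[
  \sum_{\s\in\S^n}q^n(\s)\,P_e(C_0\vert\s)\leq 2^{-n\zeta'}
\]
for every $q\in\SB$ and some $\zeta'>0$ independent of $q$. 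I would then form the random code$_\CONF$ $(C_n,G_n)$ by taking $G_n$ uniform on the symmetric group $S_n$ and letting $C_n(\pi)$ be obtained from $C_0$ by applying $\pi$ to the time coordinates of every codeword and decoding set. Ahlswede's robustification lemma then gives
\[
  \frac{1}{n!}\sum_\pi P_e(C_n(\pi)\vert\s)\leq (n+1)^{|\S|}\cdot 2^{-n\zeta'}\leq 2^{-n\zeta}
\]
for every $\s\in\S^n$, where $\zeta$ absorbs the polynomial factor. Because the permutation acts only on the time axis, every $C_n(\pi)$ inherits and shares the same non-iterative conferencing protocol $(c_1^{(n)},c_2^{(n)})$ of $C_0$, which yields the last assertion of the theorem.

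\textbf{Converse.} For the inclusion $\C_r(\S,C_1,C_2)\subseteq\C^*(\S,C_1,C_2)$, any random code$_\CONF$ with average error at most $\lambda$ uniformly in $\s\in\S^n$ has, by averaging against $\s\sim q^n$, average error at most $\lambda$ on the memoryless MAC with conferencing and channel $W(\,\cdot\,\vert\,\cdot\,,\,\cdot\,\vert q)$, for every $q\in\SB$. Hence the rate pair is achievable on the compound MAC with conferencing encoders and state family $\WQ$, and the converse of \cite{WBBJ11} for that model -- which via a standard Fano/time-sharing argument produces a single auxiliary $p\in\Pi$ valid for every $q$ -- places $(R_1,R_2)$ in $\bigcap_{q\in\SB}\R(p,q,C_1,C_2)\subseteq\C^*(\S,C_1,C_2)$.

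\textbf{Main obstacle.} The principal technical point is arranging the direct-part code $C_0$ so that the $q^n$-expected error decays exponentially \emph{uniformly in} $q\in\SB$; without such uniformity, the robustification step fails to preserve exponential decay. This uniformity is obtained by a Chernoff-type estimate on the typicality decoder together with a covering of $\SB$ by types, but it has to be set up with enough slack to absorb the $(n+1)^{|\S|}$ factor arising from the permutation average. Everything else -- the preservation of the conferencing protocol under permutation and the single-$p$ extraction in the converse -- is either immediate or a standard compound-channel argument.
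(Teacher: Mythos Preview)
Your proposal is correct and follows essentially the same route as the paper: the direct part invokes the compound-MAC result of \cite{WBBJ11} (Theorem~\ref{thmcomp}) and then applies Ahlswede's robustification lemma by randomizing over the symmetric group $S_n$, with the conferencing protocol preserved because permutations act only on the time axis; the converse reduces to the compound-channel converse by averaging the error against $q^n$ (the paper formalizes this as Lemma~\ref{auxlem}). The only minor caveat is that the converse in \cite{WBBJ11} is stated for deterministic codes, so the paper re-does the Fano/time-sharing argument for \emph{random} codes$_\CONF$ in its Appendix---precisely the ``standard'' step you gesture at.
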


\begin{rem}
  The simple form \eqref{conffunct} of conferencing means that no complicated conferencing protocol needs to be designed.
\end{rem}

\begin{rem}
  $\C^*(\S,C_1,C_2)$ was analyzed in \cite{WBBJ11}. It is convex and the auxiliary sets $\U$ can be restricted to have cardinality at most $(\lvert\X\rvert\lvert\Y\rvert+2)\wedge(\lvert\Z\rvert+3)$. Moreover, one can determine finite $C_1,C_2$ such that 
\begin{enumerate}
  \item the full-cooperation sum rate, or 
  \item the full-cooperation capacity region
\end{enumerate}
are achievable. The first statement can be phrased as 
\begin{align*}
  \max_{p\in\Pi}&\min_{q\in\P(\S)}\bigl\{(I(Z_q;X,Y\vert U)+C_1+C_2)\wedge I(Z_q;X,Y)\bigr\}\\&=\max_{p\in\Pi}\min_{q\in\P(\S)}I(Z_q;X,Y)=:C_\infty.
\end{align*}
If for $p\in\Pi$ we set $\Q_p:=\{q\in\P(\S):I(Z_q\wedge X,Y)=C_\infty\}$, then a simple calculation shows that the above condition is satisfied if 
\[
  C_1+C_2\geq C_\infty-\min_{p\in\Pi}\max_{q\in\Q_p}I(Z\wedge X,Y\vert U).
\]
Statement 2) is valid if both $R_1=C_\infty$ and $R_2=C_\infty$ are possible. For this one needs
\begin{align*}
  C_1&\geq C_\infty-\max_{p\in\Pi}\min_{q\in\Q_p}I(Z_q\wedge X\vert Y,U),\\
  C_2&\geq C_\infty-\max_{p\in\Pi}\min_{q\in\Q_p}I(Z_q\wedge Y\vert X,U).
\end{align*}

\end{rem}

\begin{rem}\label{randconv}
  Theorem \ref{thmconfrand} has a weak converse.
\end{rem}

Determining the general deterministic capacity region $\C_d(\S,C_1,C_2)$ is more complex. We give the solution in Theorem \ref{thmconf} below for $C_1\vee C_2>0$. For $C_1=C_2=0$ a partial solution is given in \cite{J},\cite{G},\cite{G2},\cite{AC}. The relation between the cases $C_1=C_2=0$ and $C_1\vee C_2>0$ is discussed in detail in Section \ref{sect:concl}.

\begin{defn}[\cite{G}]\label{defsymm}
  \begin{enumerate}
  \item $\W$ is called \emph{$(\X,\Y)$-symmetrizable} if there is a stochastic matrix
\[
  \sigma(s\vert x,y):\quad(x,y,s)\in\X\times\Y\times\S
\]
such that for every $z\in\Z$ and $x,x'\in\X$ and $y,y'\in\Y$,
\begin{equation*}
  \sum_s\!W(z\vert x,y\vert s)\sigma(s\vert x',y')\!=\!\sum_s\!W(z\vert x',y'\vert s)\sigma(s\vert x,y).
\end{equation*}
  
    \item $\W $ is called \emph{$\X$-symmetrizable} if there is a stochastic matrix
\[
  \sigma_1(s\vert x):\quad(x,s)\in\X\times\S
\]
such that for every $z\in\Z$ and $x,x'\in\X$ and $y\in\Y$,
\[
  \sum_sW(z\vert x,y\vert s)\sigma_1(s\vert x')=\sum_sW(z\vert x',y\vert s)\sigma_1(s\vert x).
\]
  \item $\W $ is called \emph{$\Y$-symmetrizable} if there is a stochastic matrix
\[
  \sigma_2(s\vert y):\quad(y,s)\in\Y\times\S
\]
such that for every $z\in\Z$ and $x\in\X$ and $y,y'\in\Y$,
\[
  \sum_sW(z\vert x,y\vert s)\sigma_2(s\vert y')=\sum_sW(z\vert x,y'\vert s)\sigma_2(s\vert y).
\]
  \end{enumerate}
\end{defn}

\begin{thm}\label{thmconf}
  For the deterministic capacity region of the AV-MAC determined by $\W $ with conferencing capacities $C_1\vee C_2>0$, we have 
\begin{align*}
  \C_d(\S,C_1,C_2)&=\C^*(\S,C_1,C_2)
\intertext{if $\W$ is not $(\X,\Y)$-symmetrizable and}
  \C_d(\S,C_1,C_2)&=\{(0,0)\}
\end{align*}
if $\W$ is $(\X,\Y)$-symmetrizable. As for $\C_r(\S,C_1,C_2)$, the Willems conferencing protocols can be assumed to have the simple non-iterative form \eqref{conffunct}.
\end{thm}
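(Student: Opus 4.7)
The strategy splits along the stated dichotomy, and each half reduces to an already-established tool.

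\textbf{Converse (symmetrizable case).} I would adapt the Csisz\'ar--Narayan single-user symmetrizability converse to the two-sender setting. Fix a symmetrizer $\sigma(s\vert x,y)$ and any code$_\CONF$ $\{(\x_{jk},\y_{jk},F_{jk})\}$, pick $(j^*,k^*)$ uniform on $[1,M_1]\times[1,M_2]$, and draw $\s$ by $s_m\sim\sigma(\cdot\vert x_{j^*k^*,m},y_{j^*k^*,m})$. The symmetrizability identity forces the random measure $\EE_\s[W^n(\cdot\vert\x_{jk},\y_{jk}\vert\s)]$ to be symmetric in the pair $(j,k)\leftrightarrow(j^*,k^*)$; pairing this with disjointness of the decoding sets and a double-counting over ordered message pairs yields $\EE_{j^*k^*,\s}[\bar P_e(\s)]\ge\tfrac12(1-(M_1M_2)^{-1})$, so some $\s\in\S^n$ attains $\bar P_e(\s)\ge\tfrac38$ whenever $M_1M_2\ge4$. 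Choosing $\lambda<\tfrac38$ in the definition of achievability rules out every non-zero rate pair, giving $\C_d(\S,C_1,C_2)=\{(0,0)\}$.

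\textbf{Achievability (non-symmetrizable case).} I combine Theorem \ref{thmconfrand} with Ahlswede's elimination of correlation, using the conference as a substitute for common randomness. Without loss of generality $C_1>0$. Fix $(R_1,R_2)\in\C^*(\S,C_1,C_2)$; Theorem \ref{thmconfrand} supplies random codes$_\CONF$ of blocklength $n_2$ with error $\le2^{-n_2\zeta}$ in the shared simple form \eqref{conffunct}. A Chernoff/Markov expurgation reduces the ensemble to a uniform mixture of $K=O(n_2^2)$ deterministic codes $C(1),\ldots,C(K)$ still achieving error $\le\lambda/2$ for every $\s\in\S^{n_2}$. I then extend encoder $1$'s message space by a selector $k\in[1,K]$ (rate cost $\log K/n\to0$) and split the $n$-slot transmission into a preamble on slots $1,\ldots,n_1$ with $n_1=O(\log n)$ and a main block on the remaining $n_2=n-n_1$ slots. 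Both encoders hold a pre-agreed preamble codebook $\{(\tilde\x_k,\tilde\y_k)\}_{k=1}^K\subset\X^{n_1}\times\Y^{n_1}$ that encodes $k$; during the preamble they transmit $(\tilde\x_k,\tilde\y_k)$ and during the main block they transmit using $C(k)$. Through the conference, encoder $1$ sends $k$ (so encoder $2$ can compute $\tilde\y_k$ from the shared preamble codebook) and runs the $C(k)$-side of the non-iterative main-phase conference, at combined cost $\log K+n_2C_1\le nC_1$ provided $n_1\ge(\log K)/C_1$, which is compatible with $n_1=O(\log n)$ since $C_1>0$. The two conferencing rounds merge into a single function, keeping the overall protocol in the form \eqref{conffunct}. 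The decoder decodes $k$ from the preamble output and then applies $C(k)$'s decoder to the tail; total error is $\le\lambda/2+o(1)$ uniformly in $\s\in\S^n$ via the product structure of $W^n$ across the two blocks.

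\textbf{Where $(\X,\Y)$-non-symmetrizability enters, and the main obstacle.} The achievability succeeds exactly when the preamble codebook exists, i.e., when the single-user AVC on input alphabet $\X\times\Y$ with transition matrices $W(\cdot\vert x,y\vert s)$ has positive deterministic capacity. Inspection of Definition \ref{defsymm}(1) shows that $(\X,\Y)$-symmetrizability of $\W$ is literally the Csisz\'ar--Narayan symmetrizability of this joint-input single-user AVC, so $(\X,\Y)$-non-symmetrizability of $\W$ directly supplies a length-$O(\log n)$ preamble carrying $K=n_2^{O(1)}$ indices with vanishing error, while the other two symmetrizabilities play no role because encoder $2$'s preamble symbol is selected cooperatively via the conference. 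The main technical obstacle is the bookkeeping: checking that composite error decays uniformly over $\s\in\S^n$ by splitting the state sequence freely across the two blocks, and that the appended $\log K$ bits of conference traffic can be folded into the non-iterative form \eqref{conffunct} while respecting the rate constraint \eqref{confcap}.
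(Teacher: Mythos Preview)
Your proposal is correct and follows essentially the same route as the paper: the symmetrizable converse is the single-user Csisz\'ar--Narayan argument applied to $\W$ viewed on input alphabet $\X\times\Y$ (the paper simply cites their theorem, you re-derive it), and the non-symmetrizable achievability is Ahlswede's elimination of correlation with a short prefix code for the selector, where the prefix comes from the positive single-user AVC capacity on $\X\times\Y$ and is passed to encoder~2 through the conference. The only cosmetic differences are that the paper shaves $\delta$ off $C_1$ in the main block and takes a closure (rather than your direct inequality $\log K\le n_1C_1$) and combines the prefix and main-block errors via Ahlswede's Innerproduct Lemma, which is the formal version of your ``product structure'' remark.
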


\begin{rem}\label{zweidim}
If $\W $ is not $(\X,\Y)$-symmetrizable, then $\C_d(\S,C_1,C_2)=\C^*(\S,C_1,C_2)$ is  at least one-dimensional. As $C_1\vee C_2>0$, in order to show this it clearly suffices to check that
\begin{equation}\label{verletzt}
  \max_{p\in\Pi}\;\min_{q\in\SB}I(Z_q;X,Y)>0
\end{equation}
if $\W $ is not symmetrizable. If \eqref{verletzt} were violated, then by \cite[Lemma 1.3.2]{CK} there would be a $q\in\SB$ such that
\[
  W(z\vert x,y\vert q)=W(z\vert x',y'\vert q)
\]
for all $x,x'\in\X,y,y'\in\Y,z\in\Z$. Thus $\W $ would be $(\X,\Y)$-symmetrizable using the stochastic matrix
\[
  \sigma(s\vert x,y)=q(s),\quad(x,y,s)\in\X\times\Y\times\S.
\]
But this would contradict our assumption, so \eqref{verletzt} must hold.
\end{rem}

\begin{rem}
  One can regard symmetrizability as the single-letterization of the adversary interpretation of the AV-MAC given above. There, a complete input word pair has to be known to the adversary who can then choose the state sequence. In the definition of $(\X,\Y)$-symmetrizability, the stochastic matrix $\sigma:\X\rightarrow\S$ means that given a \emph{letter} $x\in\X$, the adversary chooses a \emph{random} state $s\in\S$. If $\W$ is $(\X,\Y)$-symmetrizable, the adversary can thus produce a useless single-state MAC $\tilde W:(\X\times\Y)^2\rightarrow\Z$ defined by
\[
  \tilde W(z\vert x,y,x',y')=\sum_{s\in\S}W(z\vert x,y\vert s)\sigma(s\vert x',y').
\]
$\tilde W$ is useless because it is symmetric in $(x,y)$ and $(x',y')$. Thus for word pairs $(\x,\y)$ and $(\x',\y')$, the receiver cannot decide which of the pairs was input into the channel by the senders and which was induced by the adversary's random state choice.
\end{rem}

\begin{rem}
  The above adversary interpretation of symmetrizability makes AV-MACs relevant for information-theoretic secrecy. Clearly, we do not say anything about the decodability of communication taking place in an AV-MAC for non-legitimate listeners. However, reliable communication can be completely prevented in the case the AV-MAC is symmetrizable. A discussion of the single-sender arbitrarily varying wiretap channel can be found in \cite{BBSW}.
\end{rem}

\begin{rem}\label{subexpconf}
  By the definition of Willems conferencing, setting $C_1=C_2=0$ yields the traditional MAC coding, i.e. no conferencing at all is allowed. An inspection of the elimination technique applied in \ref{subsect:randdet} shows that actually it suffices to have conferencing with $V_1=n^2$, so $C_1=(2\log n)/n$ (or, by symmetry, $V_2=n^2$). Using conferencing with this non-constant rate tending to zero in non-$(\X,\Y)$-symmetrizable AV-MACs yields the capacity region $\C^*(\S,0,0)$.
\end{rem}

\begin{rem}\label{detconv}
  If $\W$ is $(\X,\Y)$-symmetrizable, then Theorem \ref{thmconf} almost has a strong converse: it is possible to show that every code that encodes more than one message incurs an average error at least $1/4$. If $\W$ is not $(\X,\Y)$-symmetrizable, then we have a weak converse.
\end{rem}

Theorem \ref{thmconf} does not carry over to the case $C_1=C_2=0$, which is the traditional AV-MAC with non-cooperative coding. To our knowledge, the full characterization of the deterministic capacity region $\C_d(\S,0,0)$ of the AV-MAC without cooperation is still open. We summarize here what has been found out in \cite{AC}, \cite{G2}, \cite{G}, and \cite{J}. For notation, observe that
\begin{align*}
  &\mathrel{\hphantom{=}}\max_{p\in\Pi}\inf_{q\in\SB}I(Z_q;X\vert Y,U)\\
  &=\max_{p\in\Pi}\inf_{q\in\SB}I(Z_q;X\vert Y)\\
  &=\max_{y\in\Y}\max_{r\in\P(\X)}\inf_{q\in\SB}I(Z_q;X\vert Y=y),
\end{align*}
where in the last term, the random vector $(X,Z_q)$ has the distribution $r(x)W(z\vert x,y\vert q)$.

\begin{thm}\label{AC_Satz}
\begin{enumerate}
  \item If $\W $ is neither $(\X,\Y)$- nor $\X$- nor $\Y$-symmetrizable, then $\C_d(\S,0,0)=\C^*(\S,0,0)$ and $\C^*(\S,0,0)$ has nonempty interior. 
  \item If $\W $ is neither $(\X,\Y)$- nor $\X$-symmetrizable, but $\Y$-symmetrizable, then 
\begin{align*}
  &\C_d(\S,0,0)\\&\quad\subset[0,\max_{y\in\Y}\max_{r\in\P(\X)}\inf_{q\in\SB}I(Z_q;X\vert Y=y)]\times\{0\}.
\end{align*}
  \item If $\W $ is neither $(\X,\Y)$- nor $\Y$-symmetrizable, but $\X$-symmetrizable, then 
\begin{align*}
  &\C_d(\S,0,0)\\&\quad\subset\{0\}\times[0,\max_{x\in\X}\max_{r\in\P(\Y)}\inf_{q\in\SB}I(Z_q;Y\vert X=x)].
\end{align*}
  \item If $\W$ is $(\X,\Y)$-symmetrizable, then $\C_d(\S,0,0)=\{(0,0)\}$.
\end{enumerate}
In particular if $\W$ is both $\X$- and $\Y$-symmetrizable, then $\C_d(\S,0,0)=\{(0,0)\}$.
\end{thm}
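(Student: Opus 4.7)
The plan is to address the four cases separately, from easiest to most involved, and to recover the ``in particular'' assertion as a corollary of cases (2) and (3).

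\textit{Case (4).} Suppose $\W$ is $(\X,\Y)$-symmetrizable via $\sigma$. I would prove that any deterministic code$_\CONF(n,M_1,M_2,0,0)$ with $M_1M_2\geq 2$ incurs average error bounded away from $0$ under some state sequence, thereby ruling out any positive rate pair. The key observation is that the averaged output law
\[
  Q_{a,b}(\z):=\sum_{\s}\sigma^n(\s\vert\x_a,\y_a)W^n(\z\vert\x_b,\y_b\vert\s)
\]
satisfies $Q_{a,b}=Q_{b,a}$ by a letter-wise application of the symmetrizability identity. Summing the quantities $W^n(F_b^c\vert\x_b,\y_b\vert\s)$ over message indices and over a state $\s$ drawn from the uniform mixture of the $\sigma^n(\cdot\vert\x_a,\y_a)$, and using the disjointness of the decoding sets, a short counting argument forces the average error to be at least $1/4$ for some state sequence.

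\textit{Case (1).} When $\W$ admits no symmetrization, the argument proceeds in two steps. First, I would show that the random coding capacity equals $\C^*(\S,0,0)$ by specializing Theorem \ref{thmconfrand} to $C_1=C_2=0$; the robustification technique lifts a compound MAC coding theorem over $\WQ$ to a random code for the AV-MAC with exponentially small error for every state sequence. Second, I would derandomize without any conferencing via the Ahlswede--Cai subcode selection: a decoder more refined than standard joint typicality is constructed to simultaneously exploit all three non-symmetrizability conditions, yielding a deterministic codebook that is good uniformly in $\s$. The nonempty interior of $\C^*(\S,0,0)$ would be checked along the lines of Remark \ref{zweidim} combined with a single-user direct part on each coordinate.

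\textit{Cases (2) and (3).} By symmetry I treat case (2). For the $R_2=0$ component, apply the kernel-symmetry argument of case (4) using $\sigma_2(s\vert y)$ but restricted to sender 2's codewords; the resulting averaging forces error at least $1/4$ for some $\s$ whenever $M_2\geq 2$, and this step needs only $\Y$-symmetrizability. For the upper bound on $R_1$, the adversary can use $\sigma_2$ to replace any actual $\y$ by a jammed surrogate, so that from the receiver's viewpoint the effective transmission reduces to a single-user AVC from sender 1 parametrized by some possibly adversarially chosen $y\in\Y$; a standard single-user AVC converse then yields the claimed single-letter bound. The ``in particular'' statement follows: $\Y$-symmetrizability forces $R_2=0$ and, by the symmetric argument, $\X$-symmetrizability forces $R_1=0$, so if both hold then $\C_d(\S,0,0)=\{(0,0)\}$.

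The main obstacle is the derandomization in case (1): producing a single deterministic codebook good against every state sequence without any shared randomness nor conferencing goes substantially beyond the averaging arguments used elsewhere. It requires the Ahlswede--Cai decoder, which is tailored to exploit all three non-symmetrizability conditions at once and whose analysis is considerably more delicate than the standard joint-typicality machinery underlying Theorem \ref{thmconf}.
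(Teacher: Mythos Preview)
The paper does not prove Theorem~\ref{AC_Satz}. It is quoted as prior work: the remark immediately following the theorem attributes part (1) to \cite{AC} and \cite{J} and parts (2)--(4) to \cite{G2,G}, and explicitly notes that the exact form of $\C_d(\S,0,0)$ in cases (2) and (3) is still open. So there is no in-paper proof to compare your proposal against; your outline is a reconstruction of the cited literature rather than of anything the authors do here.

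That said, one part of your sketch is off. In case (2), your argument for the upper bound on $R_1$ via ``the adversary can use $\sigma_2$ to replace any actual $\y$ by a jammed surrogate'' is not how this bound arises. The $\Y$-symmetrizability and the matrix $\sigma_2$ are used only to force $R_2=0$; once that is established, the bound on $R_1$ follows for free from the trivial inclusion $\C_d(\S,0,0)\subset\C_r(\S,0,0)=\C^*(\S,0,0)$ (Theorem~\ref{thmconfrand} at $C_1=C_2=0$). Intersecting $\C^*(\S,0,0)$ with the line $R_2=0$ and using the identity the paper records just before the theorem,
\[
  \max_{p\in\Pi}\inf_{q\in\SB}I(Z_q;X\vert Y,U)=\max_{y\in\Y}\max_{r\in\P(\X)}\inf_{q\in\SB}I(Z_q;X\vert Y=y),
\]
gives exactly the stated interval. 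No single-user AVC converse or adversary construction with $\sigma_2$ is needed for the $R_1$ part. Your outlines for cases (1) and (4) are consistent with what \cite{AC}, \cite{J}, \cite{CN}, and \cite{G} actually do.
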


\begin{rem}
  1) from Theorem \ref{AC_Satz} is due to \cite{AC} and \cite{J}. The other points are due to \cite{G2,G}. The precise characterization of $\C_d(\S,0,0)$ in points 2) and 3) is still open.
\end{rem}

\begin{rem}
  The relation between the three kinds of symmetrizability from Definition \ref{defsymm} is treated in Section \ref{sect:concl}. There we provide the example of an AV-MAC which is both $\X$- and $\Y$-symmetrizable but not $(\X,\Y)$-symmetrizable. 
\end{rem}

\subsection{Related Coding Results}

The set $\WQ$ also determines a compound MAC. This channel differs from the AV-MAC in that it does not change its state during the transmission of a codeword, only constant state sequences are possible. Thus the probability that $\z\in\Z^n$ is received given the transmission of words $\x\in\X^n$, $\y\in\Y^n$ only depends on the given state $ q\in\SB$. It equals
\begin{equation}\label{Wprodmass}
  W^n(\z\vert\x,\y\vert\q_q):=\prod_{m=1}^nW(z_m\vert x_m,y_m\vert q),
\end{equation}
where we denote elements of $\P(\S)^n$ by $\q$ and set $\q_q:=(q,\ldots,q)\in\P(\S)^n$.

\begin{defn}
  The set of stochastic matrices
\[
  \{W^n(\,\cdot\,\vert\,\cdot\,,\,\cdot\,\vert\q_q):q\in\SB,n=1,2,\ldots\}
\]
is called the \emph{compound Multiple Access Channel (compound MAC)} determined by $\WQ$.
\end{defn}

One uses the same deterministic codes$_\CONF$ as for the AV-MAC. Let 
\[
  \{(\x_{jk},\y_{jk},F_{jk}):(j,k)\in[1,M_1]\times[1,M_2]\}
\]
be such a code$_\CONF$. It has an average error less than $\lambda$ for the compound MAC determined by $\WQ$ if
\[
  \frac{1}{M_1M_2}\sum_{j,k}W^n(F_{jk}^c\vert\x_{jk},\y_{jk}\vert\q_q)\leq\lambda
\]
for every $q\in\SB$. Using this average error criterion, the concept of achievability and the definition of the capacity region is analogous to that for deterministic coding for AV-MACs. 

\begin{rem}
Comparing the error criteria for the AV-MAC and the compound MAC from the adversary perspective, one observes that the AV-MAC yields a significantly more robust performance. Theorem \ref{thmconf} describes the region achievable if transmission is reliable for every possible sequence the adversary might choose, whereas Theorem \ref{thmcomp} describes the region which is achievable if the adversary is restricted to constant state sequences.
\end{rem}

In \cite{WBBJ11} the following theorem was proved.

\begin{thm}\label{thmcomp}
  The capacity region of the compound MAC determined by $\WQ$ with conferencing capacities $C_1,C_2\geq 0$ equals $\C^*(\S,C_1,C_2)$. More precisely, for every achievable rate pair $(R_1,R_2)\in\C^*(\S,C_1,C_2)$ and every $\eps>0$, there is a $\zeta>0$ and a sequence of codes$_\CONF$ $(n,M_1^{(n)},M_2^{(n)},C_1,C_2)$ with an average error at most $2^{-n\zeta}$ and
\[
  \frac{1}{n}\log M_\nu^{(n)}\geq R_\nu-\eps\quad(\nu=1,2).
\]
These codes$_\CONF$ can be chosen such that their conferencing protocols have the form \eqref{conffunct}.
\end{thm}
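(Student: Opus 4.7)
The plan is to prove the achievability and converse parts separately. For the direct part, fix $p\in\Pi$ with $p(u,x,y)=p_0(u)p_1(x\vert u)p_2(y\vert u)$ and a target rate pair $(R_1,R_2)$ in the interior of $\bigcap_{q\in\SB}\R(p,q,C_1,C_2)$. The driving idea is rate splitting enabled by a single non-iterative conferencing exchange. Write $M_\nu=V_\nu M_\nu'$ with $V_\nu=\lfloor 2^{nC_\nu}\rfloor$, and decompose each message $j\in[1,M_\nu]$ into a \emph{conferencing part} $j_c\in[1,V_\nu]$ and a \emph{private part} $j_p$. A protocol of the form \eqref{conffunct} with $c_\nu(j)=j_c$ then makes both encoders share the pair $(j_{1,c},j_{2,c})$ after one round, playing the role of a common message, and satisfies $\log V_\nu\leq nC_\nu$ exactly as required.

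The random codebook would be built in two layers. First, generate $V_1V_2$ i.i.d.\ $U^n$-sequences from $p_0^{\otimes n}$, one per pair $(j_{1,c},j_{2,c})$. Second, conditionally on each $U^n$, draw private codebooks of sizes $M_1'$ and $M_2'$ componentwise i.i.d.\ from $p_1(\cdot\vert u)$ and $p_2(\cdot\vert u)$. Since the decoder does not know the unknown but constant state $q\in\SB$, it must be universal; I would use an MMI decoder, or equivalently a joint-typicality rule phrased in terms of the empirical output type. A standard random coding analysis bounding error events by their type-class exponents, together with a union bound over the polynomially many types on $\X^n\times\Y^n\times\Z^n$, yields an average error decaying as $2^{-n\zeta}$ uniformly in $q\in\SB$ provided the private rates satisfy $R_1'\leq I(Z_q;X\vert Y,U)$, $R_2'\leq I(Z_q;Y\vert X,U)$, $R_1'+R_2'\leq I(Z_q;X,Y\vert U)$, and the total rate satisfies $R_1+R_2\leq I(Z_q;X,Y)$ for every $q$. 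Setting $R_\nu=R_\nu'+C_\nu$ recovers exactly the region $\R(p,q,C_1,C_2)$, and taking the union over $p\in\Pi$ yields $\C^*(\S,C_1,C_2)$.

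For the converse I would apply Fano's inequality to any sequence of reliable codes for the compound MAC. Introducing the auxiliary variable $U_i:=(c_1(J,K),c_2(J,K),Z^{i-1})$ together with a time-sharing index captures that both encoders share exactly the conferencing outputs; the entropy bounds $H(c_\nu(J,K))\leq \log V_\nu\leq nC_\nu$ then produce the additive $C_\nu$-terms in the individual and sum-rate inequalities, while the unconditional bound $R_1+R_2\leq I(Z_q;X,Y)$ follows without conditioning on $U$. One minimizes over $q\in\SB$, and the cardinality of $\U$ is reduced by a Carath\'eodory argument as mentioned in the remark following Theorem~\ref{thmconfrand}. The main obstacle is establishing the uniform-in-$q$ exponential error bound for the universal decoder over the continuum $\SB$: the polynomial bound on the number of empirical types on $\Z^n$, absorbed into an exponentially decaying per-type error, is precisely the mechanism that turns the state-specific MAC analysis into a true compound statement, and is where the technical work would concentrate.
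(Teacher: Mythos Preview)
The paper does not actually prove Theorem~\ref{thmcomp}; it imports it wholesale from \cite{WBBJ11}, as stated explicitly in the sentence preceding the theorem. Your sketch is therefore not to be compared against any argument in the present paper, but rather against the cited reference. That said, your outline is correct and is essentially the approach of \cite{WBBJ11}: the non-iterative conferencing \eqref{conffunct} is used to create a common message index $(j_{1,c},j_{2,c})$, superposition coding with cloud center $U$ and satellites $X,Y$ conditionally independent given $U$ reduces the problem to a compound MAC with common message, and a universal (type-based/MMI) decoder together with the polynomial type-counting bound delivers the uniform exponential error decay over $q\in\SB$. The converse sketch with $U_i=(c_1,c_2,Z^{i-1})$ and Fano is likewise the standard argument used there. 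So while the present paper contains no proof for you to match, your proposal reproduces the strategy of the cited source and is sound.
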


Finally we have to recall the definition and a corollary of the deterministic coding result for single-user Arbitrarily Varying Channels (AVCs). Let $\A$ be a finite input alphabet, $\B$ a finite output alphabet, and $\S$ a finite state set. Let a family 
\[
  \H:=\{H(\,\cdot\,\vert\,\cdot\,\vert s):s\in\S)\}
\]
of stochastic matrices
\[
  H(b\vert a\vert s):\quad(a,b)\in\A\times\B
\]
be given. As for AV-MACs, every state sequence $\s=(s_1,\ldots,s_n)\in\S^n$ determines a new stochastic matrix
\begin{equation*}
  H^n(\b\vert\a\vert\s):=\prod_{m=1}^nH(b_m\vert a_m\vert s_m):\quad(\a,\b)\in\A^n\times\B^n.
\end{equation*}

\begin{defn}
  The set of stochastic matrices 
\[
  \{H^n(\,\cdot\,\vert\,\cdot\,\vert\s):\s\in\S^n,n=1,2,\ldots\}
\]
is called the \emph{Arbitrarily Varying Channel (AVC)} determined by $\H$.
\end{defn}

The admissible codes are classical single-user codes as used for discrete memoryless channels. If such a code with blocklength $n$ and codelength $M$ has the form 
\[
  \{(\a_\ell,F_\ell):\ell\in[1,M]\},
\]
then the average error incurred by this code is smaller than $\lambda\in(0,1)$ if
\[
  \frac{1}{M}\sum_\ell H^n(F_\ell^c\vert\a_\ell\vert\s)\leq\lambda\qquad\text{for all }\s\in\S^n.
\]
Then it is obvious what is meant by ``achievable rates'' and ``capacity'' for $\H$. The capacity of single-user AVCs, which was determined in \cite{CN}, exhibits a dichotomy similar to the one claimed in Theorem \ref{thmconf}. It is described by the original symmetrizability concept from \cite{E}.

\begin{defn}\label{susymm}
  $\H$ is called \emph{symmetrizable} if there is a stochastic matrix
\[
  \sigma(s\vert a):\quad(a,s)\in\A\times\S
\]
such that for every $b\in\B$ and $a,a'\in\A$
\[
  \sum_sH(b\vert a\vert s)\sigma(s\vert a')=\sum_sH(b\vert a'\vert s)\sigma(s\vert a).
\]
\end{defn}

\begin{rem}\label{rem:equi}
  Clearly, the $(\X,\Y)$-symmetrizability of the MAC $\W $ means nothing but symmetrizability of $\W $ when considered as a set of stochastic matrices with inputs from the alphabet $\A=\X\times\Y$.
\end{rem}

\begin{thm}[\cite{CN}, Theorem 1]\label{CN_Satz}
  The deterministic capacity of the single-user AVC determined by $\H$ is positive if and only if $\H$ is not symmetrizable. If $\H$ is symmetrizable, then every code with at least two codewords incurs an average error at least $1/4$.
\end{thm}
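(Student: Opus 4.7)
I would prove the two directions separately.

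\emph{Converse (symmetrizable case).} Extend the single-letter symmetrizer $\sigma$ to blocklength $n$ by the product $\sigma^n(\s\vert\a):=\prod_m\sigma(s_m\vert a_m)$; factoring over coordinates yields
\[
  \sum_\s H^n(\b\vert\a\vert\s)\sigma^n(\s\vert\a')=\sum_\s H^n(\b\vert\a'\vert\s)\sigma^n(\s\vert\a)
\]
for all $\a,\a',\b$. Given any code $\{(\a_\ell,F_\ell):\ell\in[1,M]\}$ with $M\geq 2$ and disjoint $F_\ell$, let the adversary draw $\s$ from the mixture $P(\s):=M^{-1}\sum_{\ell'}\sigma^n(\s\vert\a_{\ell'})$, and set $Q_{\ell,\ell'}(\b):=\sum_\s H^n(\b\vert\a_\ell\vert\s)\sigma^n(\s\vert\a_{\ell'})$. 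The symmetry $Q_{\ell,\ell'}=Q_{\ell',\ell}$ together with $F_\ell^c\cup F_{\ell'}^c=\B^n$ for $\ell\neq\ell'$ yields
\[
  \sum_{\ell\neq\ell'}\bigl[Q_{\ell,\ell'}(F_\ell^c)+Q_{\ell,\ell'}(F_{\ell'}^c)\bigr]\geq M(M-1),
\]
and by relabeling both sums on the left are equal, so each is $\geq M(M-1)/2$. Hence the expected average error under $P$ equals $M^{-2}\sum_{\ell,\ell'}Q_{\ell,\ell'}(F_\ell^c)\geq(M-1)/(2M)\geq 1/4$, so some state sequence $\s^\ast$ achieves this bound. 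In particular the deterministic capacity vanishes.

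\emph{Direct part (non-symmetrizable case).} I would first verify that non-symmetrizability forces $I_0:=\max_{P\in\P(\A)}\min_{q\in\SB}I(X;Y_q)>0$, with $Y_q$ distributed according to $\sum_s H(\cdot\vert\cdot\vert s)q(s)$: otherwise, the argument of Remark \ref{zweidim} produces $q$ with $\sum_s H(b\vert a\vert s)q(s)$ independent of $a$, making $\sigma(s\vert a):=q(s)$ a symmetrizer. Pick a type $P^\ast$ witnessing $I_0>0$ and, for $R<I_0$, draw $M=\lceil 2^{nR}\rceil$ codewords $\a_1,\dots,\a_M$ i.i.d.\ uniformly from the type class $T_{P^\ast}^n$. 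Equip this random codebook with a universal decoder that outputs $\ell$ if $(\a_\ell,\b)$ has a joint type matching $P^\ast(a)\sum_s H(b\vert a\vert s)\hat q(s)$ for some $\hat q\in\SB$, and no other codeword $\a_{\ell'}$ is simultaneously consistent with $\b$ under any state distribution.

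The main obstacle is bounding the average probability (over the random codebook) of decoding error, uniformly in $\s\in\S^n$. Standard type-counting handles the event that the true codeword fails to be jointly typical with $\b$, since only the joint type of $(\a_\ell,\s)$ matters. The decisive step is the ambiguity event where a competitor $\a_{\ell'}$ is also consistent with $\b$; here non-symmetrizability must be invoked to extract a strictly positive exponent. Concretely, one shows that if simultaneous consistency held with probability not exponentially small, one could assemble from the two competing conditional distributions on $\S$ a stochastic matrix $\sigma(s\vert a)$ satisfying the symmetrizability condition, contradicting the hypothesis. Quantifying this contradiction and union-bounding over the polynomially many relevant joint types and over types of $\s$ gives an exponentially small expected average error; picking a realization then yields a deterministic code of rate $\approx R$, proving positive deterministic capacity.
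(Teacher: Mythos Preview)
The paper does not prove Theorem \ref{CN_Satz}; it merely quotes it from \cite{CN} and uses it as a black box (in Theorem \ref{thmposrat} and in the converse for the $(\X,\Y)$-symmetrizable case). So there is no ``paper's own proof'' to compare against.

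That said, your converse argument is the standard Ericson/Csisz\'ar--Narayan computation and is correct as written: the product extension of $\sigma$, the symmetry $Q_{\ell,\ell'}=Q_{\ell',\ell}$, and the counting $\sum_{\ell\neq\ell'}Q_{\ell,\ell'}(F_\ell^c)\geq M(M-1)/2$ together give the $1/4$ bound exactly as in \cite{CN}.

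Your direct part is a faithful high-level sketch of the \cite{CN} strategy (constant-composition random codebook, universal/type-based decoder, ambiguity analysis), but the step you flag as ``the main obstacle'' is precisely where all the work lies, and your description of it is not yet a proof. In \cite{CN} the non-symmetrizability hypothesis enters via a quantitative lemma: one fixes a distribution $P^\ast$ and shows there is a constant $\eta(P^\ast)>0$ such that no stochastic matrix $\sigma$ can make the two averaged channels $\sum_s H(\cdot\vert a\vert s)\sigma(s\vert a')$ and $\sum_s H(\cdot\vert a'\vert s)\sigma(s\vert a)$ agree to within $\eta$ in total variation for all $a,a'$. This uniform gap is what converts the ambiguity event into an exponentially small probability after the type-class union bound. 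Your phrase ``one could assemble \dots\ a stochastic matrix $\sigma$'' hides exactly this compactness/continuity step; without isolating the quantitative $\eta>0$, the contradiction argument does not close. If you intend to actually carry out the proof rather than cite it, that lemma is the piece you still need to supply.
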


\section{The Direct Parts}\label{sect:ach}

We derive the direct part of Theorem \ref{thmconfrand} from Theorem \ref{thmcomp} in Subsections \ref{comparb} and \ref{subsect:amount}. Then, if $\W $ is not $(\X,\Y)$-symmetrizable, we derandomize in Subsections \ref{subsect:posrat} and \ref{subsect:randdet} to obtain the direct part of Theorem \ref{thmconf}.

\subsection{From Compound to Arbitrarily Varying}\label{comparb}

Here we prove the direct part of Theorem \ref{thmconfrand}. We use Ahlswede's ``robustification lemma''. Let $S_n$ be the symmetric group (the group of permutations) on the set $[1,n]$. $S_n$ operates on $\S^n$ by $\pi(\s):=(s_{\pi(1)},\ldots,s_{\pi(n)})$ for any $\pi\in S_n$ and $\s=(s_1,\ldots,s_n)\in\S^n$. Further recall the notation $\q_q$ defined in \eqref{Wprodmass}.

\begin{lem}[\cite{A3}, Lemma RT]\label{robusti}
  If $h:\S^n\rightarrow[0,1]$ satisfies for a $\lambda\in(0,1)$ and for all $q\in\P(\S)$ the inequality
\begin{equation}\label{h-bed}
  \sum_{\s\in\S^n}h(\s)\q_q(\s)\geq1-\lambda,
\end{equation}
then it also satisfies the inequality
\[
  \frac{1}{n!}\sum_{\pi\in S_n}h(\pi(\s))\geq 1-(n+1)^{\lvert\S\rvert}\lambda\qquad\text{for all }\s\in\S^n.
\]
\end{lem}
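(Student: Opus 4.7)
The plan is to use the method of types. First I would note that for any fixed $\s\in\S^n$, the multiset $\{\pi(\s):\pi\in S_n\}$ traverses the type class $T_\s$ of $\s$ with every element appearing exactly $n!/\lvert T_\s\rvert$ times, so that
\[
  A(\s):=\frac{1}{n!}\sum_{\pi\in S_n}h(\pi(\s))=\frac{1}{\lvert T_\s\rvert}\sum_{\s'\in T_\s}h(\s').
\]
Furthermore, letting $P_\s\in\P(\S)$ denote the empirical type of $\s$, the product probability $\q_{P_\s}(\s')=\prod_{m=1}^n P_\s(s'_m)$ depends only on the type of $\s'$; in particular it is constant on $T_\s$, which gives $\q_{P_\s}(\s')=\q_{P_\s}(T_\s)/\lvert T_\s\rvert$ for every $\s'\in T_\s$.

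Next I would apply the hypothesis \eqref{h-bed} with the choice $q:=P_\s\in\P(\S)$. Splitting the sum on the left-hand side of \eqref{h-bed} according to whether $\s'\in T_\s$ or not, and bounding $h\le 1$ on the complement of $T_\s$, I obtain
\[
  1-\lambda\le A(\s)\,\q_{P_\s}(T_\s)+\bigl(1-\q_{P_\s}(T_\s)\bigr),
\]
which rearranges to $A(\s)\ge 1-\lambda/\q_{P_\s}(T_\s)$. To finish, I would invoke the classical type-counting inequality $\q_{P_\s}(T_\s)\ge(n+1)^{-\lvert\S\rvert}$, yielding $A(\s)\ge 1-(n+1)^{\lvert\S\rvert}\lambda$ as required.

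The only nontrivial ingredient is the lower bound $\q_{P_\s}(T_\s)\ge(n+1)^{-\lvert\S\rvert}$, which is precisely where the factor $(n+1)^{\lvert\S\rvert}$ in the conclusion arises. This is a standard consequence of the method of types, coming from the identity $\sum_P\q_{P_\s}(T_P)=1$ together with the maximality $\q_{P_\s}(T_{P_\s})\ge\q_{P_\s}(T_P)$ for every type $P$ and the fact that there are at most $(n+1)^{\lvert\S\rvert}$ distinct types on $\S^n$. All remaining steps are direct manipulations of the hypothesis, so this type-counting bound is really the only place where care is needed.
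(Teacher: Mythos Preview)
Your proof is correct. Note, however, that the paper does not actually give its own proof of this lemma: it is quoted verbatim as ``Lemma RT'' from Ahlswede~\cite{A3} and used as a black box in Subsection~\ref{comparb}. Your argument is essentially the standard proof of the robustification lemma (and is, up to presentation, the one Ahlswede gives): reduce the permutation average to a type-class average, apply the hypothesis at the empirical distribution $q=P_\s$, and control the loss via the type-counting lower bound $\q_{P_\s}(T_\s)\ge(n+1)^{-\lvert\S\rvert}$. There is nothing to compare against in this paper, but the argument you wrote is sound as stated.
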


Now let $(R_1,R_2)\in\C^*(\S,C_1,C_2)$. Theorem \ref{thmcomp} states that for any $\eps>0$ there is a $\zeta>0$ such that for sufficiently large $n$ there is a code$_\CONF(n,M_1,M_2,C_1,C_2)$ with an average error at most $2^{-n\zeta}$ and satisfying
\[
  \frac{1}{n}\log M_\nu\geq R_\nu-\eps\quad(\nu=1,2).
\]
Writing this code$_\CONF$ in the form \eqref{codeform}, this means for every $ q\in\SB$ that
\begin{equation}\label{comperr}
  \frac{1}{M_1M_2}\sum_{j,k}W^n(F_{jk}\vert\x_{jk},\y_{jk}\vert\q_q)\geq 1-2^{-n\zeta}.
\end{equation}
We would like to apply Lemma \ref{robusti} with $\lambda=2^{-n\zeta}$ to the function $h:\S^n\rightarrow[0,1]$ defined by
\[
  h(\s):=\frac{1}{M_1M_2}\sum_{j,k}W^n(F_{jk}\vert\x_{jk},\y_{jk}\vert\s).
\]
Thus we need to show that $h$ satisfies \eqref{h-bed}. Let $q\in\P(\S)$. By \eqref{comperr}, one obtains
\begin{align*}
  &\mathrel{\hphantom{=}}\sum_{\s\in\S^n}h(\s)q^n(\s)\\
  &=\frac{1}{M_1M_2}\sum_{j,k}\sum_{\z\in F_{jk}}\sum_{\s\in\S^n}W^n(\z\vert\x_{jk},\y_{jk}\vert\s)\q_q(\s)\\
  &=\frac{1}{M_1M_2}\sum_{j,k}\sum_{\z\in F_{jk}}W^n(\z\vert\x_{jk},\y_{jk}\vert\q_q)\\
  &\geq 1-2^{-n\zeta},
\end{align*}
and \eqref{h-bed} is satisfied. Applying Lemma \ref{robusti}, one obtains
\begin{equation}\label{randerr}
  \frac{1}{n!}\sum_{\pi\in S_n}h(\pi(\s))\geq 1-(n+1)^{\lvert\S\rvert}2^{-n\zeta}\qquad\text{for all }\s\in\S^n.
\end{equation}
Recall that $\pi^{-1}$ also is an element of $S_n$. Writing $\pi^{-1}(F_{jk})=\{\pi^{-1}(\z):\z\in F_{jk}\}$, the left side of \eqref{randerr} equals
\begin{align}
  &\hphantom{\mathrel{=}}\frac{1}{n!}\sum_{\pi\in S_n}\biggl(\frac{1}{M_1M_2}\sum_{j,k}W^n(F_{jk}\vert\x_{jk},\y_{jk}\vert\pi(\s))\biggr)\notag\\
  &=\frac{1}{n!}\sum_{\pi\in S_n}\biggl(\frac{1}{M_1M_2}\times\notag\\
  &\mathrel{\hphantom{=}}\times\sum_{j,k}W^n\left(\pi^{-1}(F_{jk})\vert\pi^{-1}(\x_{jk}),\pi^{-1}(\y_{jk})\vert\s\right)\biggr).\label{expr}
\end{align}
Because of the bijectivity of $\pi^{-1}$, the family of sets $\{\pi^{-1}(F_{jk}):(j,k)\in[1,M_1]\times[1,M_2]\}$ is disjoint. Thus \eqref{expr} is the average error expression of the random code $(C,G)$ when applied in the AV-MAC determined by $\W$, where $G$ is uniformly distributed on $\Gamma:=S_n$ and where for every $\pi\in S_n$,
\begin{align*}
  C(\pi):=\{\bigl(\pi^{-1}(\x_{jk}),&\pi^{-1}(\y_{jk}),\pi^{-1}(F_{jk})\bigr):\\&(j,k)\in[1,M_1]\times[1,M_2]\}.
\end{align*}
The conferencing protocol remains the same for every $C(\pi)$, as the conference only concerns the messages and not the codewords. By \eqref{randerr} the average error of this random code is less than $(n+1)^{\lvert\S\rvert}2^{-n\zeta}$, hence it tends to zero exponentially. Thus $\C^*(\S,C_1,C_2)\subset\C_r(\S,C_1,C_2)$, which proves the direct part of Theorem \ref{thmconfrand}.

\subsection{Bounding the amount of correlation}\label{subsect:amount}

As a first derandomization step to proving the direct part of Theorem \ref{thmconf}, we have to show the following lemma.

\begin{lem}\label{n^2}
  To every random code$_\CONF(n,M_1,M_2,C_1,C_2)$ with average error at most $\lambda$ which is given as a pair $(C,G)$ there exists a random code$_\CONF(n,M_1,M_2,C_1,C_2)$ with an average error smaller than $3\lambda$ given as a pair $(C',G')$ where $C'\subset C$ and $\lvert C'\rvert= n^2$ and where $G'$ is uniformly distributed on $[1,n^2]$.
\end{lem}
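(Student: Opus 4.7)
The plan is to apply Ahlswede's elimination-of-correlation sampling argument: draw $N=n^2$ indices $\gamma_1,\ldots,\gamma_N$ independently and identically from $\Gamma$ according to $p_G$, and consider the candidate random code$_\CONF$ that picks $C(\gamma_i)$ with uniform probability $1/N$. Each $C(\gamma_i)$ is an honest deterministic code$_\CONF(n,M_1,M_2,C_1,C_2)$, so the candidate is a valid random code$_\CONF$ of the required form. What remains is to show that, with positive probability over the sampling, the resulting average error stays below $3\lambda$ uniformly in $\s\in\S^n$; then some realization $(\gamma_1,\ldots,\gamma_N)$ furnishes the desired $C'$ (viewed as the subfamily $\{C(\gamma_i):i\in[1,N]\}$) and $G'$ (uniform on $[1,N]$).

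Fix $\s\in\S^n$. The random variables $P_e(C(\gamma_i)\vert\s)$ lie in $[0,1]$, are independent with common mean $\mu(\s):=\EE_{G}P_e(C(G)\vert\s)\leq\lambda$ by the hypothesis of the lemma. Hoeffding's inequality yields
\[
  \Pr\!\left(\tfrac{1}{N}\sum_{i=1}^N P_e(C(\gamma_i)\vert\s)>3\lambda\right)\leq\exp(-2N(2\lambda)^2)=\exp(-8n^2\lambda^2).
\]
A union bound over the $|\S|^n$ possible state sequences gives
\[
  \Pr\!\left(\exists\,\s\in\S^n:\;\tfrac{1}{N}\sum_{i=1}^N P_e(C(\gamma_i)\vert\s)>3\lambda\right)\leq|\S|^n\exp(-8n^2\lambda^2),
\]
which is strictly less than $1$ as soon as $n>(\ln|\S|)/(8\lambda^2)$. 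Hence a deterministic realization $(\gamma_1^*,\ldots,\gamma_N^*)$ of the sample exists for which the uniform mixture of $C(\gamma_1^*),\ldots,C(\gamma_N^*)$ has average error below $3\lambda$ on \emph{every} state sequence.

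The main obstacle, and the reason one commits to exactly $n^2$ samples, is precisely the tension between the $|\S|^n$-sized union bound and the concentration obtained from $N$ independent draws. Hoeffding contributes $\exp(-\Omega(N\lambda^2))$, so one needs $N$ to grow faster than $n$; the quadratic choice $N=n^2$ is the smallest polynomial choice that still comfortably swamps the exponential $|\S|^n$ for fixed $\lambda$ and large $n$, and it is what later ensures that the residual common randomness can be eliminated at negligible rate cost of $(2\log n)/n$ (cf.\ Remark \ref{subexpconf}). I would conclude by noting that if the $\gamma_i^*$ happen to coincide, one may either relabel the index set (so that $C'$ is a multi-set of size $n^2$ equipped with the uniform distribution, which is equivalent to a strict subset of $C$ for the purposes of the average-error calculation) or, when $|\Gamma|$ is large enough that a without-replacement argument applies, extract genuinely distinct indices; in either formulation the claimed random code$_\CONF$ with $|C'|=n^2$ and $G'$ uniform on $[1,n^2]$ is obtained, and its conferencing protocol is simply inherited from the constituent codes.
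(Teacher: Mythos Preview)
Your proof is correct and follows the same sampling-plus-union-bound structure as the paper's own argument. The only difference is the concentration tool: you invoke Hoeffding's inequality where the paper uses Jahn's Lemma~\ref{Jahnslemma}, giving an exponent of order $n^2\lambda^2$ rather than $n^2\lambda$; for fixed $\lambda$ both comfortably beat the $\lvert\S\rvert^n$ union bound for large $n$, so the distinction is immaterial here (it would only matter if one wanted $\lambda$ to shrink with $n$, cf.\ the remark following the paper's proof).
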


For the proof of Lemma \ref{n^2}, we need a simple result from \cite[Section IV]{J}.

\begin{lem}\label{Jahnslemma}
  Let $N$ i.i.d. random variables $T_1,\ldots,T_N$ with values in $[0,1]$ and underlying probability measure $\PP$ be given. Let $\bar\lambda>0$. Denote by $\EE$ the expectation corresponding to $\PP$. Then
\[
  \PP\left[\frac{1}{N}\sum_{m=1}^NT_m>\bar\lambda\right]\leq\exp\left(-(\bar\lambda-e\EE[T_1])N\right).
\]
\end{lem}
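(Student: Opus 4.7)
\medskip

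\noindent\textbf{Proof proposal for Lemma \ref{Jahnslemma}.} The plan is a standard Chernoff/exponential-moment argument, with the mild twist that the bound on the moment generating function exploits $T_1\in[0,1]$ rather than any sub-Gaussian or boundedness-of-deviations structure. First I would pass from the sum $\sum_m T_m$ to its exponential by Markov's inequality: for every $t>0$,
\[
  \PP\!\left[\tfrac{1}{N}\sum_{m=1}^N T_m>\bar\lambda\right]
  =\PP\!\left[e^{t\sum_m T_m}>e^{tN\bar\lambda}\right]
  \le e^{-tN\bar\lambda}\,\EE\!\bigl[e^{t\sum_m T_m}\bigr]
  =e^{-tN\bar\lambda}\,\bigl(\EE[e^{tT_1}]\bigr)^N,
\]
using the i.i.d. assumption in the last equality. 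The parameter $t$ will be optimized (actually, simply set to $t=1$) at the end.

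The key step is to control $\EE[e^{T_1}]$ by $\EE[T_1]$. Since $T_1\in[0,1]$, one has $T_1^k\le T_1$ for every $k\ge 1$, so expanding the exponential series termwise yields
\[
  \EE[e^{T_1}]
  =1+\sum_{k\ge1}\frac{\EE[T_1^k]}{k!}
  \le 1+\EE[T_1]\sum_{k\ge1}\frac{1}{k!}
  =1+(e-1)\,\EE[T_1].
\]
Combining this with the elementary inequality $1+x\le e^x$ gives $\EE[e^{T_1}]\le\exp\!\bigl((e-1)\EE[T_1]\bigr)$, hence
\[
  \bigl(\EE[e^{T_1}]\bigr)^N\le\exp\!\bigl(N(e-1)\EE[T_1]\bigr).
\]

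Plugging this into the Chernoff estimate with $t=1$ produces
\[
  \PP\!\left[\tfrac{1}{N}\sum_{m=1}^N T_m>\bar\lambda\right]
  \le\exp\!\bigl(-N\bar\lambda+N(e-1)\EE[T_1]\bigr)
  =\exp\!\bigl(-(\bar\lambda-(e-1)\EE[T_1])N\bigr).
\]
Since $e-1<e$, we have $-(\bar\lambda-(e-1)\EE[T_1])\le-(\bar\lambda-e\,\EE[T_1])$, so the claimed bound follows a fortiori. There is essentially no obstacle here: the only delicate point is the choice to replace the generic Chernoff bound on bounded random variables (Hoeffding-style) by the coarser $T_1^k\le T_1$ estimate, which is precisely what is needed to get a bound in terms of $\EE[T_1]$ rather than $\EE[T_1^2]$ or $\mathrm{Var}(T_1)$; this is the form in which the lemma will be used in the sequel, where $\EE[T_1]$ is the small quantity (an exponentially small average error) and a first-moment dependence is exactly what makes the subsequent code-selection argument work.
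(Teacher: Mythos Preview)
Your argument is correct. The paper itself does not supply a proof of this lemma; it simply quotes it from Jahn \cite[Section~IV]{J}, so there is no ``paper's own proof'' to compare against. Your Chernoff-with-$t=1$ approach combined with the elementary estimate $T_1^k\le T_1$ for $T_1\in[0,1]$ is the standard way to obtain this bound, and indeed you even obtain the slightly sharper constant $e-1$ before relaxing to the stated $e$.
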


\begin{proof}[Proof of Lemma \ref{n^2}]
Let a random code$_\CONF$ $(C,G)$ with blocklength $n$ and average error smaller than $\lambda$. Recalling our notation \eqref{P_edefn}, the fact that $(C,G)$ has an average error less than $\lambda$ can be stated as
\[
  \EE[P_e(C(G)\vert \s)]\leq\lambda\quad\text{for every }\s\in\S^n.
\]
Let $G_1,\ldots,G_{n^2}$ be independent copies of $G$. This induces a family of $n^2$ independent copies of $(C,G)$. The goal is to show
\begin{equation}\label{goal}
  \PP\biggl[\;\frac{1}{n^2}\sum_{m=1}^{n^2} P_e(C(G_m)\vert \s)\leq3\lambda\text{ for all }\s\in\S^n\biggr]>0.
\end{equation}
Given \eqref{goal}, there is a realization $(\gamma_1,\ldots,\gamma_{n^2})$ of $(G_1,\ldots,G_{n^2})$ such that 
\begin{equation}\label{reform}
  \frac{1}{n^2}\sum_{m=1}^{n^2} P_e(C(\gamma_m)\vert \s)\leq3\lambda
\end{equation}
for every $\s\in\S^n$. Then one defines a random code $(C',G')$ by setting
\[
  C':=\{C(\gamma_m):m \in[1,n^2]\}
\]
and by taking $G'$ to be uniformly distributed on $[1,n^2]$. The expression \eqref{reform} then is nothing but the statement that the average error of the random code $(C',G')$ is smaller than $3\lambda$, and we are done.

It remains to prove \eqref{goal}. $\S$ is finite by assumption, so $\lvert\S^n\rvert$ grows exponentially with blocklength. Hence it suffices to show that
\begin{equation}\label{suffices}
  \PP\biggl[\;\frac{1}{n^2}\sum_m P_e(C(G_m)\vert \s)>3\lambda\biggr]
\end{equation}
is superexponentially small uniformly in $\s\in\S^n$. Let us fix an $\s\in\S^n$. The $G_m$ are i.i.d.\ copies of $G$, so by Lemma \ref{Jahnslemma}, the term \eqref{suffices} is smaller than
\begin{equation}\label{bound}
  \exp\bigl(-\left(3\lambda-e\,\EE[P_e(C(G)\vert \s)]\right)n^2\bigr).
\end{equation}
By assumption
\[
  \EE[P_e(C(G)\vert \s)]\leq\lambda,
\]
so the exponent in \eqref{bound} is negative. This gives the desired superexponential bound on \eqref{suffices}.
\end{proof}

\begin{rem}
  Note that we cannot require the codes$_\CONF$ with at most $n^2$ values of $G$ to have an exponentially small probability of error. This is due to the fact that the exponent in \eqref{bound} must not decrease exponentially in order for the proof to work. Thus there is a trade-off between the error probability and the number of deterministic component codes$_\CONF$ of the random codes$_\CONF$ used to achieve the random capacity region of the AV-MAC with conferencing encoders.
\end{rem}

\subsection{A Positive Rate}\label{subsect:posrat}

In the second derandomization step, we show that if $\W $ is not $(\X,\Y)$-symmetrizable and $C_1>0$ or $C_2>0$ 
then the encoder with the positive conferencing capacity achieves a positive rate by deterministic coding. Without loss of generality we may assume that $C_1>0$.

\begin{thm}\label{thmposrat}
  Let $C_1>0$. If $\W $ is not $(\X,\Y)$-symmetrizable, then there exists an $R$ with $0<R<C_1$ such that the rate pair $(R,0)\in\C^*(\S,C_1,C_2)$ is deterministically achievable using codes$_\CONF$ with conferencing capacity pair $(C_1,0)$ such that the conferencing function $c_1$ is the identity on the message set.
\end{thm}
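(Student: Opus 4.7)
The plan is to reduce the theorem to a \emph{single-user} AVC on the composite input alphabet $\X\times\Y$. Since $c_1$ is the identity on the message set, after conferencing encoder~2 learns encoder~1's message $j\in[1,M_1]$; with $M_2=1$ (consistent with second rate $0$), both encoders thereafter share $j$ and jointly transmit $(\x_j,\y_j)\in\X^n\times\Y^n$. This is exactly a deterministic single-user code of blocklength $n$ for the AVC $\H:=\{H(\,\cdot\,\vert\,\cdot\,\vert s):s\in\S\}$ on input alphabet $\A:=\X\times\Y$ defined by $H(z\vert(x,y)\vert s):=W(z\vert x,y\vert s)$. By Remark~\ref{rem:equi}, non-$(\X,\Y)$-symmetrizability of $\W$ is equivalent to non-symmetrizability of $\H$, and Theorem~\ref{CN_Satz} therefore gives $C(\H)>0$.

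Independently, Remark~\ref{zweidim} provides some $p\in\Pi$ with $\delta:=\min_{q\in\SB}I(Z_q;X,Y)>0$. Fix any $R$ with $0<R<\min(C(\H),\delta,C_1)$. The direct part of Theorem~\ref{CN_Satz} then yields, for every $\lambda\in(0,1)$ and all sufficiently large $n$, a deterministic AVC code $\{((\x_j,\y_j),F_j):j\in[1,M_1]\}$ for $\H$ with $M_1=\lceil 2^{nR}\rceil$ and average error below $\lambda$. Promote it to a code$_\CONF(n,M_1,1,C_1,0)$ by setting $c_1(j):=j$ (so $V_1=M_1$), $c_2\equiv 1$ (so $V_2=1$), $f_1(j,1):=\x_j$, $f_2(1,j):=\y_j$, and $\Phi^{-1}(j,1):=F_j$. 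For $n$ large the constraint $\tfrac{1}{n}\log V_1\leq C_1$ holds because $R<C_1$, the compatibility conditions~\eqref{conf1},\eqref{conf2} are trivial (there is only one $k$), and the code$_\CONF$'s average error over the AV-MAC coincides with the AVC code's.

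It remains to verify $(R,0)\in\C^*(\S,C_1,C_2)$. For the $p\in\Pi$ above and every $q\in\SB$, the three defining inequalities of $\R(p,q,C_1,C_2)$ read: $R\leq I(Z_q;X\vert Y,U)+C_1$, which holds since $R<C_1$; $0\leq I(Z_q;Y\vert X,U)+C_2$, which is trivial; and $R\leq(I(Z_q;X,Y\vert U)+C_1+C_2)\wedge I(Z_q;X,Y)$, which holds since $R<C_1\leq I(Z_q;X,Y\vert U)+C_1+C_2$ and $R<\delta\leq I(Z_q;X,Y)$. Hence $(R,0)\in\R(p,q,C_1,C_2)$ for every $q$, so $(R,0)\in\C^*(\S,C_1,C_2)$.

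The main conceptual step is spotting the identity-conferencing reduction to a single-user AVC on $\X\times\Y$; once this is in place, Theorem~\ref{CN_Satz} supplies the positive deterministic rate and Remark~\ref{zweidim} takes care of the $\C^*$-membership. The one mild subtlety is that the AVC capacity $C(\H)$ need not be attained at an input distribution of the product form required by $\Pi$ (where $X\perp Y\mid U$), which is why the rate is pinned below the auxiliary quantity $\delta$ rather than below $C(\H)$, and why Remark~\ref{zweidim} is invoked separately for the membership check; the non-$(\X,\Y)$-symmetrizability hypothesis is exactly what makes both $C(\H)$ and $\delta$ strictly positive.
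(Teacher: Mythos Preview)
Your proof is correct and follows essentially the same approach as the paper: both reduce to the single-user AVC on $\X\times\Y$ via identity conferencing, invoke Remark~\ref{rem:equi} and Theorem~\ref{CN_Satz} to get a positive deterministic rate $R<C_1$, and promote the resulting single-user code to a code$_\CONF(n,M_1,1,C_1,0)$. The only difference is that you additionally verify $(R,0)\in\C^*(\S,C_1,C_2)$ explicitly by pinning $R$ below $\delta=\min_q I(Z_q;X,Y)$ and invoking Remark~\ref{zweidim}, whereas the paper's proof simply asserts achievability and leaves membership in $\C^*$ implicit (presumably relying on the converse, or treating it as obvious since only achievability is actually used downstream in the elimination technique).
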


\begin{proof}
By Remark \ref{rem:equi} and Theorem \ref{CN_Satz}, $\W $ considered as a single-user AVC with vector inputs from the alphabet $\X\times\Y$ has positive capacity. The idea of the proof is to construct from a code for this single-user AVC a code$_\CONF$ such that the first transmitter achieves a positive rate. There is a positive rate $R<C_1$ which is deterministically achievable by the single-user AVC determined by $\W $. This means that for every $\lambda\in(0,1)$ and every $\eps>0$, for $n$ large enough, there is a single-user code
\[
  \{(\x_\ell,\y_\ell,F_\ell):\ell\in[1,M_1]\}
\]
 for $\W $ with
\[
  2^{n(R-\eps/2)}\leq M_1\leq2^{nR}
\]
and with
\[
  \frac{1}{M_1}\sum_{\ell=1}^{M_1}W^n(F_\ell^c\vert\x_\ell,\y_\ell\vert\s)\leq\lambda\qquad\text{for all }\s\in\S^n.
\]

By setting $c_1$ to be the identity on $[1,M_1]$, this code becomes a code$_\CONF(n,M_1,1,C_1,0)$. This is allowed because $\log M_1\leq nR\leq nC_1$. The encoding and decoding functions are defined in the obvious way. Thus the positive rate pair $(R,0)$ is achievable.
\end{proof}

\subsection{From Random to Deterministic}\label{subsect:randdet}

Finally we can show that if $\W $ is not $(\X,\Y)$-symmetrizable, then $\C^*(\S,C_1,C_2)\subset\C_d(\S,C_1,C_2)$. To do so we follow Ahlswede's ``Elimination Technique'' \cite{A1}, whose idea is to use random codes and to replace the randomness needed there by a prefix code with small blocklength which encodes the set of constituent deterministic codes. We again assume that $C_1>0$.

Theorem \ref{thmposrat} implies that there is a $0<R<C_1$ such that for any $\eps\in(0,R)$ and any $\lambda\in(0,1)$, if $n$ is large, there is a code$_\CONF$
\[
  \{(\x_{\gamma}^*,\y_{\gamma}^*,F_{\gamma}^*):\gamma\in[1,n^2]\}
\]
with blocklength $m$,
\begin{equation}\label{m-bed}
  \frac{2}{R}\log n\leq m\leq\frac{2}{R-\eps}\log n,
\end{equation}
with codelength pair $(n^2,0)$ and with average error smaller than $\lambda$. Further, the conferencing function $c_1^*$ is the identity on the set $[1,n^2]$.

For any $0<\delta<C_1$, let $(R_1,R_2)\in\R(p, q,C_1-\delta,C_2)$ for some $p\in\Pi$ and all $q\in\SB$. By Theorem \ref{thmconfrand}, this rate pair is achievable with conferencing capacities $C_1-\delta$ and $C_2$ under random coding. For every $\eps>0$, $\lambda\in(0,1)$, and large $n$ this implies the existence of a random code$_\CONF(n,M_1,M_2,C_1-\delta,C_2)$ defined by a pair $(C,G)$ such that
\[
  \frac{1}{n}\log M_\nu\geq R_\nu-\eps
\]
and with an average error smaller than $\lambda$. Again by Theorem \ref{thmconfrand}, we may further assume that the deterministic component codes$_\CONF$ $C(\gamma)$ of $C$ share the same conferencing protocol $(c_1,c_2)$. As we do not need an exponential decrease of the average error, we may by Lemma \ref{n^2} assume that $G$ is uniformly distributed on $\Gamma=[1,n^2]$. Let every deterministic component code$_\CONF$ $C(\gamma)$ be given as 
\[
  \{(\x_{jk}^{\gamma},\y_{jk}^{\gamma},F_{jk}^{\gamma}):(j,k)\in[1,M_1]\times[1,M_2]\}.
\]

We now construct a deterministic code$_\CONF$ $(\tilde c_1,\tilde c_2,\tilde f_1,\tilde f_2,\tilde\Phi)$ with blocklength $m+n$, message sets $[1,n^2]\times[1,M_1]$ and $[1,M_2]$ (yielding the codelength pair $(n^2M_1,M_2)$), conferencing capacities $C_1,C_2$, and average error smaller than 2$\lambda$. It is defined via concatenation. We define the conferencing functions to be
\begin{align*}
  \tilde c_1(\gamma,j)&:=(\gamma,c_1(j))\in[1,n^2]\times[1,V_1],\\\tilde c_2(k)&:=c_2(k)\in[1,V_2].
\end{align*}
Note that $(\tilde c_1,\tilde c_2)$ has the form \eqref{conffunct}. It is a permissible conferencing protocol if
\[
  \frac{1}{2\,R^{-1}\log n+n}\log n\leq\delta,
\]
because then
\[
  \frac{1}{m+n}\log(n^2V_1)\leq\frac{1}{2\,R^{-1}\log n+n}\log n+\frac{1}{n}\log V_1\leq C_1.
\]

If the encoders have the messages $(\gamma,j)$ and $k$, respectively, they use the codewords
\[
  \bigl(\x^*_{\gamma},\x^{\gamma}_{jk}\bigr)\in\X^{m+n}\quad\text{and}\quad\bigl(\y^*_{\gamma},\y^{\gamma}_{jk}\bigr)\in\Y^{m+n}.
\]
Together with the conferencing protocol $(\tilde c_1,\tilde c_2)$ defined above, this fixes encoding functions $f_1$ and $f_2$, as \eqref{conf1} and \eqref{conf2} are satisfied. The decoding set of the code$_\CONF$ deciding for the pair $\bigl((\gamma,j),k\bigr)$ is defined to be $F^*_{\gamma}\times F_{jk}^{\gamma}\subset\Z^{m+n}$. Thus the deterministic code$_\CONF$ achieving the rate pair $(R,0)$ is used as a prefix code which distinguishes the deterministic component codes$_\CONF$ of the random code$_\CONF$. In this way, derandomization can be seen as a two-step protocol. Setting $a:=2/(R-\eps)$, the rates of the new code are
\[
  \frac{1}{m+n}\log(nM_\nu)
  \geq\frac{1}{a\frac{\log n}{n}+1}\cdot\frac{1}{n}\log M_\nu\geq R_\nu-2\eps,
\]
where the second inequality holds for all $n$ large enough such that 
\[
  \frac{1}{a\frac{\log n}{n}+1}\geq\frac{R_\nu-2\eps}{R_\nu-\eps}.
\]

The randomness of the random code is needed in the estimation of the average error incurred by this coding procedure. Recall Ahlswede's Innerproduct Lemma \cite{A1}:
\begin{lem}\label{innerprod}
  Let $(\alpha_1,\ldots,\alpha_N)$ and $(\beta_1,\ldots,\beta_N)$ be two vectors with $0\leq\alpha_m,\beta_m\leq 1$ for $m=1,\ldots,N$ which for some $\lambda\in(0,1)$ satisfy
\begin{equation}\label{geschbed}
  \frac{1}{N}\sum_{m=1}^N\beta_m\geq 1-\lambda,\qquad\frac{1}{N}\sum_{m=1}^N\alpha_m\geq 1-\lambda,
\end{equation}
then
\[
  \frac{1}{N}\sum_{m=1}^N\alpha_m\beta_m\geq1-2\lambda.
\]
\end{lem}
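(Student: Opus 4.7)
The plan is to reduce the claim to a pointwise elementary inequality and then average. The key observation is that, because $\alpha_m,\beta_m\in[0,1]$, we have $(1-\alpha_m)(1-\beta_m)\geq 0$, which rearranges to
\[
  \alpha_m\beta_m \;\geq\; \alpha_m+\beta_m-1.
\]
This holds for every index $m\in[1,N]$ and uses only the assumption that the two vectors lie in the unit cube; the hypotheses \eqref{geschbed} are not needed for this step.

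Next, I would sum this inequality over $m$ and divide by $N$, obtaining
\[
  \frac{1}{N}\sum_{m=1}^N\alpha_m\beta_m \;\geq\; \frac{1}{N}\sum_{m=1}^N\alpha_m + \frac{1}{N}\sum_{m=1}^N\beta_m - 1.
\]
Now the two averages on the right-hand side are each at least $1-\lambda$ by \eqref{geschbed}, so the right-hand side is at least $(1-\lambda)+(1-\lambda)-1 = 1-2\lambda$, which gives the claim.

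There is no real obstacle here: the only ingredient beyond the hypotheses is the elementary positivity of $(1-\alpha_m)(1-\beta_m)$, and averaging preserves the pointwise bound. Conceptually, the lemma quantifies a union-bound-type statement: if two sets of ``good'' indices each have density $\geq 1-\lambda$, then their simultaneous contribution (measured by the product of the corresponding bounded quantities) cannot drop by more than $2\lambda$ below $1$. This is exactly the form needed to combine the error bound of the prefix code $\{(\x^*_\gamma,\y^*_\gamma,F^*_\gamma)\}$ with that of the random code component $C(\gamma)$ when controlling the average error of the concatenated deterministic code$_\CONF$ constructed above.
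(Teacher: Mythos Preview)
Your proof is correct: the pointwise inequality $(1-\alpha_m)(1-\beta_m)\geq 0$ rearranged and averaged is exactly the standard argument for this lemma. The paper itself does not prove Lemma~\ref{innerprod} but merely quotes it from Ahlswede~\cite{A1}, so there is no alternative approach to compare against; your argument is in fact the usual one-line proof.
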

We use this lemma with $N=n^2$ and replace the index $m$ by $\gamma\in[1,n^2]$. Fix an $\s\in\S^n$ and set
\begin{align*}
  \alpha_{\gamma}&=W^m(F^*_{\gamma}\vert\x^*_{\gamma},\y^*_{\gamma}\vert\s),\\
  \beta_{\gamma}&=\frac{1}{M_1M_2}\sum_{j,k}W^n\bigl(F_{jk}^{\gamma}\vert\x^{\gamma}_{jk},\y^{\gamma}_{jk}\vert\s\bigr).
\end{align*}
Then the conditions in \eqref{geschbed} are satisfied because both the deterministic prefix code $(c_1^*,c_2^*,f_1^*,f_2^*,\Phi^*)$ and the random code $(C,G)$ with constituent codes $(c_1,c_2,f_1^{\gamma},f_2^{\gamma},\Phi^{\gamma})$ have an average error smaller than $\lambda$. Lemma \ref{innerprod} now implies that the code$_\CONF$ $(\tilde c_1,\tilde c_2,\tilde f_1,\tilde f_2,\tilde\Phi)$ constructed above has an average error probability smaller than $2\lambda$. 

This shows that the rate pair $(R_1,R_2)$ is achievable for $\W $ with conferencing capacities $C_1,C_2$. Consequently one obtains
\[
  \bigcup_{\delta>0}\C^*(\S,C_1-\delta,C_2)\subset\C_d(\S,C_1,C_2).
\]
As the capacity region is closed, $\C^*(\S,C_1,C_2)$, which is the closure of the set on the left-hand side, is contained in $\C_d(\S,C_1,C_2)$ as well. This proves the direct part of Theorem \ref{thmconf}.

\section{Converses for the AV-MAC with Conferencing Encoders}\label{sect:conv}

Here we prove the converses claimed in Remark \ref{randconv} and \ref{detconv}. Recall that a weak converse means that, depending on the situation, any deterministic or random code$_\CONF(n,M_1,M_2,C_1,C_2)$ such that the real two-dimensional vector $((1/n)\log M_1,(1/n)\log M_2)$ is at least distance $\eps>0$ from the achievable rate region incurs an average error at least $\lambda(\eps)>0$ if $n$ is large. 

\subsection{Random Coding}

Here we prove the weak converse for Theorem \ref{thmconfrand} (see Remark \ref{randconv}). The idea of the proof is to reduce it to the weak converse for the compound MAC with conferencing encoders defined by $\WQ$ where the use of random codes is allowed. This is proved in the Appendix.

Every $\q\in\P(\S)^n$ induces a product measure via $\q(\s)=q_1(s_1)\cdots q_n(s_n)$. The notation \eqref{Wprodmass} carries over to these general $\q$. Further, recall the notation introduced in \eqref{P_edefn}. We generalize this notation by setting
\[
  P_e(C(\gamma)\vert\q):=\frac{1}{M_1M_2}\sum_{j,k}W^n((F_{jk}^\gamma)^c\vert\x_{jk}^\gamma,\y_{jk}^\gamma\vert\q).
\]
The following lemma is a generalized version of Lemma 2.6.3 in \cite{CK}.
\begin{lem}\label{auxlem}
  For any random code$_\CONF$ which is defined by $(C,G)$ and whose components have the form
\[
  \{(\x_{jk}^\gamma,\y_{jk}^\gamma,F_{jk}^\gamma):(j,k)\in[1,M_1]\times[1,M_2]\},
\]
one has
\begin{align*}
  \sup_{\s\in\S^n}\sum_{\gamma\in\Gamma}P_e(C(\gamma)\vert\s)&p_G(\gamma)\\&=\sup_{\q\in\P(\S)^n}\sum_{\gamma\in\Gamma}P_e(C(\gamma)\vert\q)p_G(\gamma).
\end{align*}
\end{lem}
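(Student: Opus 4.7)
The plan is to reduce both sides of the claimed equality to a supremum of one and the same real-valued function on $\S^n$, and then to check that this supremum is unaffected by the passage from deterministic state sequences to product state distributions. The first step is to exploit the product structure of both $W^n$ and $\q\in\P(\S)^n$: a direct expansion gives
\[
  W^n(\z\vert\x,\y\vert\q)=\sum_{\s\in\S^n}\q(\s)W^n(\z\vert\x,\y\vert\s),
\]
since both sides equal $\prod_{m=1}^n\sum_sW(z_m\vert x_m,y_m\vert s)q_m(s)$. Summing this identity over $\z\in(F_{jk}^\gamma)^c$ and averaging over message pairs yields the key linearity relation
\[
  P_e(C(\gamma)\vert\q)=\sum_{\s\in\S^n}\q(\s)P_e(C(\gamma)\vert\s).
\]

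Next, setting $F(\s):=\sum_{\gamma\in\Gamma}p_G(\gamma)P_e(C(\gamma)\vert\s)$ and interchanging the (finite) order of summation, the right-hand supremum in the lemma becomes $\sup_{\q\in\P(\S)^n}\sum_{\s}\q(\s)F(\s)$, and it remains to show
\[
  \sup_{\s\in\S^n}F(\s)=\sup_{\q\in\P(\S)^n}\sum_{\s\in\S^n}\q(\s)F(\s).
\]
For ``$\leq$'' I would simply take, for any $\s^*\in\S^n$, the product of Diracs $\q=\delta_{s_1^*}\otimes\cdots\otimes\delta_{s_n^*}$, which belongs to $\P(\S)^n$ and makes the weighted average collapse to $F(\s^*)$. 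For ``$\geq$'', under any $\q$ the expression $\sum_{\s}\q(\s)F(\s)$ is a convex combination of the finitely many numbers $F(\s)$ and hence bounded above by $\max_{\s\in\S^n}F(\s)$.

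There is no real obstacle here; the one step to verify with care is the factorization identity above, which relies crucially on the memoryless product form of $W^n$ and on $\q$ being a product measure (rather than an arbitrary probability on $\S^n$). Conceptually, the lemma expresses that, for a memoryless channel, randomizing the state according to a product distribution is never more harmful than picking the worst deterministic state sequence; this is precisely what will let us port the weak converse for the compound MAC, which is naturally phrased in terms of product distributions $\q_q$, over to a weak converse for the AV-MAC under random coding.
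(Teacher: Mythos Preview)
Your proof is correct and follows essentially the same approach as the paper's: both rely on the identity $W^n(\z\vert\x,\y\vert\q)=\sum_{\s}\q(\s)W^n(\z\vert\x,\y\vert\s)$ to linearize $P_e(\cdot\vert\q)$ in $\q$, then observe that a convex combination of the values $F(\s)$ cannot exceed their maximum, while the Dirac embedding gives the reverse inequality. The paper is terser (it simply declares ``$\leq$'' to be clear and does not introduce the auxiliary function $F$), but the argument is the same.
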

\begin{proof}
The direction ``$\leq$'' is clear. In order to prove ``$\geq$'', let $\q\in\P(\S)^n$. Clearly
\[
  W^n(\z\vert\x,\y\vert\q)
  =\sum_{\s\in\S^n}\q(\s)W^n(\z\vert\x,\y\vert\s).
\]
Thus
\begin{align*}
  \sum_{\gamma\in\Gamma}P_e(C(\gamma)\vert\q)p_G(\gamma)&=\sum_{\s\in\S^n}\q(\s)\sum_{\gamma\in\Gamma}P_e(C(\gamma)\vert\s)p_G(\gamma)\\
  &\leq\sup_{\s\in\S^n}\sum_{\gamma\in\Gamma}P_e(C(\gamma)\vert\s)p_G(\gamma).
\end{align*}
Upon taking the supremum over $\q\in\P(\S)^n$ on the left-hand side, the lemma is proved.
\end{proof}

Now let a random code$_\CONF$ $(n,M_1,M_2,C_1,C_2)$ be given defined by $(C,G)$ and with average error at most $\lambda$. Assume that the pair $((1/n)\log M_1,(1/n)\log M_2)$ is at distance at least $\eps$ from $\C^*(\S,C_1,C_2)$. Because of Lemma \ref{auxlem}, 
\begin{equation}\label{canonly}
  \lambda_0:=\sup_{q\in\P(\S)}\sum_{\gamma\in\Gamma}P_e(C(\gamma)\vert\q_q)p_G(\gamma)\leq\lambda.
\end{equation}
Thus the random code$_\CONF$ $(C,G)$ has an average error at most $\lambda$ for the compound MAC with conferencing encoders defined by $\WQ$. But the weak converse for the compound MAC with conferencing encoders and random coding, which is proved in the Appendix, implies that \eqref{canonly} can only hold if $\lambda\geq\lambda_0\geq\lambda(\eps)>0$. This concludes the weak converse for the AV-MAC with conferencing encoders using random codes$_\CONF$, and Theorem \ref{thmconfrand} is proved.

\subsection{Deterministic Coding}

\subsubsection{If $\W $ is $(\X,\Y)$-symmetrizable}

If $\W $ is $(\X,\Y)$-symmetrizable, then by Remark \ref{rem:equi} it is also symmetrizable if considered as a single-user AVC with input alphabet $\X\times\Y$. Thus Theorem \ref{CN_Satz} implies any single-user code with at least two codewords incurs an average error greater than $1/4$. Finally, note that every code$_\CONF$ for the AV-MAC with conferencing encoders determined by $\W $ also is a code for the single-user AVC determined by $\W$, so this carries over to the multi-user situation. This proves Theorem \ref{thmconf} if $\W $ is $(\X,\Y)$-symmetrizable.

\subsubsection{If $\W $ is not $(\X,\Y)$-symmetrizable}

We show that the weak converse for the compound MAC determined by $\WQ$ implies the weak converse for the AV-MAC determined by $\W $. Let a code$_\CONF$ $(n,M_1,M_2,C_1,C_2)$ be given. If the rate pair $((1/n)M_1,(1/n)M_2)$ is at least distance $\eps$ away from $\C^*(\S,C_1,C_2)$ and if $n$ is sufficiently large, then there is a $q\in\SB$ such that
\[
  \frac{1}{M_1M_2}\sum_{j,k}W^n(F_{jk}^c\vert\x_{jk},\y_{jk}\vert\q_q)\geq\lambda(\eps)
\]
for some $\lambda(\eps)>0$ because of the weak converse for the compound MAC. Lemma \ref{auxlem} now implies that
\[
  \sup_{\s\in\S^n}\frac{1}{M_1M_2}\sum_{j,k}W^n(F_{jk}^c\vert\x_{jk},\y_{jk}\vert\s)\geq\lambda(\eps)
\]
must hold. Thus the proof of Theorem \ref{thmconf} is complete.

\section{Discussion and Conclusion}\label{sect:concl}

The goal of this paper was to characterize the capacity region of an AV-MAC whose encoders may exchange limited information about their messages. This topic is motivated by the increasing interest of cooperative networks which are subject to exterior interference. For example, spectrum sharing has been discussed for inclusion into future wireless system standards. We saw above that the AV-MAC can be interpreted as a channel suffering from attacks by an adversary who may choose the state sequence given the channel inputs. The reliability requirements for AV-MACs are very strict -- coding is done such that the average error is small for every possible state sequence. The resulting capacity region is the same as that for the conferencing compound MAC determined by the convex hull of the set of channel matrices of the original AV-MAC \textit{if} the latter is not $(\X,\Y)$-symmetrizable. Otherwise the AV-MAC is useless. In contrast, for AV-MACs without conferencing, the complete characterization of the 
deterministic capacity region is still open.

The dichotomy in the form of the deterministic capacity regions of arbitrarily varying multiple-access channels does not occur if random coding is used. However, using a random code requires both the senders and the receiver to have access to a common source of randomness. Thus random coding is usually only used as a mathematical tool for finding good deterministic codes. It is well known that derandomization is no problem for compound channels (including discrete memoryless channels as a special case) -- this is nothing but the well-known ``random coding method''. It builds on the fact that the finite number of channel states does not increase with blocklength. This is not so in the case of arbitrarily varying channels. The number of states \emph{per channel use} remains constant, but the number of states per transmission of a codeword increases \emph{exponentially} in blocklength. This is the reason why the deterministic capacity region of arbitrarily varying multiple-access channels may be strictly 
contained in the random coding capacity region. In fact, if derandomization is not possible, then no positive rates are achievable at all.

In contrast to the derandomization technique used for simpler channels, Ahlswede's elimination technique gives rise to a two-step protocol. In order to approximate a given achievable rate pair $(R_1,R_2)$, one only needs the constituent deterministic codes of a random code whose rate pair approximates $(R_1,R_2)$. The randomness of the random code is used in the average error estimate. (On the other hand, this shows how much weaker the average error criterion is compared with the maximal error requirement -- the randomized part can be ``hidden'' in the average error.)

It is noteworthy that for the arbitrarily varying multiple-access channel, the conferencing protocols needed to achieve any rate pair within the capacity region remain as simple as for the compound multiple-access channel with conferencing encoders. There are no iterative steps, so the implementation of such a conference is straightforward.

Finally we would like to analyze the benefits of Willems conferencing. We compare the gains obtained in AV-MACs to the gains obtained in compound MACs. For both compound and AV-MACs, conferencing may help to achieve positive rates where only the rate pair $(0,0)$ is achievable without transmitter cooperation. This effect is similar to the ``superactivation'' of quantum channels as observed in \cite{SYQuantum}, where it was shown that there are pairs of quantum channels with zero quantum capacity each which achieve positive rates when used together. 

Every compound MAC with conferencing capacities $C_1\vee C_2>0$ and 
\begin{equation}\label{comp>0}
  \max_{p\in\Pi}\min_{q\in\P(\S)}I(Z_q;X,Y)>0
\end{equation}
has an at least one-dimensional capacity region. If \eqref{comp>0} is not satisfied, then the capacity region equals $\{(0,0)\}$. No matter what dimension the corresponding $\C^*(\S,0,0)$ has, the gains of conferencing are continuous in $C_1,C_2$, in particular in $(C_1,C_2)=(0,0)$. This is in contrast to the AV-MAC. The changes in in the deterministic capacity region $\C_d(\S,C_1,C_2)$ are continuous in all $(C_1,C_2)$ with $C_1\vee C_2>0$ because either $\C_d(\S,C_1,C_2)=\C^*(\S,C_1,C_2)$  for all $(C_1,C_2)$ with $C_1\vee C_2>0$ or $\C_d(\S,C_1,C_2)=\{(0,0)\}$ for all $C_1,C_2$. However, there may be a discontinuity in $(C_1,C_2)=(0,0)$.

This corresponds to the two roles conferencing plays in AV-MACs. The ``traditional'' role is to generate a common message and to use the coding result for the (compound) MAC with common message to enlarge the capacity region. For AV-MACs, it does even more -- it changes the channel structure. Recall Remark \ref{subexpconf}. For a conferencing rate pair with $C_1\vee C_2=(2\log n)/n$, the capacity region of the compound MAC stays as it is. Under the conditions that $\W$ is not $(\X,\Y)$-symmetrizable and $\C_d(\S,0,0)\neq\C^*(\S,0,0)$, though, we can strictly enlarge the capacity region of the AV-MAC with this kind of conferencing. 

General conditions for $\C_d(\S,0,0)\neq\C^*(\S,0,0)$ to hold cannot be given because an exact characterization of $\C_d(\S,0,0)$ is generally unavailable. We certainly know by Theorem \ref{AC_Satz} that if $\C_d(\S,0,0)$ is two-dimensional, then $\C_d(\S,0,0)=\C^*(\S,0,0)$. We can further say that if in addition to not being $(\X,\Y)$-symmetrizable, $\W$ is both $\X$- and $\Y$-symmetrizable, then $\C_d(\S,0,0)=\{(0,0)\}$, again by Theorem \ref{AC_Satz}. This is a situation where already the conferencing from Remark \ref{subexpconf} helps. With the same argumentation as in Remark \ref{zweidim} it can easily be seen that 
\[
  \max_{p\in\Pi}\min_{q\in\P(\S)}I(Z_q;X,Y\vert U)>0,
\]
so $\C^*(\S,0,0)$ is at least one-dimensional. Thus there is a discontinuity in $(C_1,C_2)=(0,0)$ in this case. Gubner \cite{G} has found the example of a $\W $ which is both $\X$- and $\Y$-symmetrizable, but not $(\X,\Y)$-symmetrizable.

\begin{ex}
  Let $\X=\Y=\S=\{0,1\}$ and $\Z=\{0,1,2,3\}$. For $s\in\S$ set
\[
  W(z\vert x,y\vert s)=\delta(z-x-y-s),
\]
where $\delta(t)=1$ if $t=0$ and $\delta(t)=0$ else. An equivalent description of this is
\[
  z=x+y+s.
\]
Gubner shows that $\W $ is not $(\X,\Y)$-symmetrizable, but that it is both $\X$- and $\Y$-symmetrizable. Thus this channel is useless if coding is done without conferencing, even though the interfering signal is only added to the sum of the transmitters' signals -- the reliable transmission of messages through the channel is completely prevented. This shows that even the structure of rather simple AV-MACs can be changed by conferencing so as to produce discontinuous jumps at $(C_1,C_2)=(0,0)$.
\end{ex}

\appendix\label{appconv}

Here we prove the weak converse for the compound MAC with conferencing encoders defined by $\WQ$ and with random coding. Let a random code$_\CONF(n,M_1,M_2,C_1,C_2)$ be given which is defined by the pair $(C,G)$. Let this code have average error at most $\lambda$. Denote the conferencing function of the deterministic component code$_\CONF$ with index $\gamma$ by $c_\gamma$. The set $c_\gamma$ maps into is denoted by $[1,V_\gamma]$.

We assume that the pair $((1/n)\log M_1,(1/n)\log M_2)$ is at least distance $\eps$ away from $\C^*(\S,C_1,C_2)$. As all norms are equivalent on the plane, we can without loss of generality work with the $\ell^1$-norm. That means that we assume that
\begin{align*}
  &\sup_{(R_1,R_2)\in\C^*(\S,C_1,C_2)}\!\left\{\left\lvert\frac{1}{n}\log M_1-R_1\right\rvert+\left\lvert\frac{1}{n}\log M_2-R_2\right\rvert\right\}\\&\qquad\quad\geq\eps.
\end{align*}
This statement is equivalent to the fact that for every $p\in\Pi$ there is some $q\in\SB$ such that one of the following inequalities holds:
\begin{align}
  \frac{1}{n}\log M_1&\geq C_1+I(Z_q; X\vert Y,U)+\eps,\label{erstegute}\\
  \frac{1}{n}\log M_2&\geq C_2+I(Z_q; Y\vert X,U)+\eps,\\
  \frac{1}{n}\log M_1M_2\\\geq\{C_1+&C_2+I(Z_q;X,Y\vert U)\wedge I(Z_q;X,Y\vert U)\}+\eps.\label{letztegute}
\end{align}

Our goal is to mainly use arguments already known from the weak converse for deterministic coding, so that we can refer to \cite{WBBJ11}. From the random code$_\CONF$, we define several random variables in addition to $G$:
\begin{itemize}
  \item the pair $(T_1,T_2)$, which is uniformly distributed on $[1,M_1]\times[1,M_2]$ and independent of $G$,\smallskip
  \item the pair $(\tilde U_1,\tilde U_2):=(c_1^G(T_1,T_2),c_2^G(T_1,T_2))$ taking values in $[1,V_1]\times[1,V_2]$, where for $\nu=1,2$ we define $V_\nu=\max_{\gamma\in\Gamma}V_\nu^\gamma$,\smallskip
  \item $\tilde X:=\x_{T_1T_2}^G$, $\tilde Y:=\y_{T_1T_2}^G$,\smallskip
  \item a random variable $\tilde Z\in\Z^n$ which satisfies
\begin{align*}
  &\mathrel{\hphantom{=}}\PP[\tilde Z=\z\vert\tilde X=\x,\tilde Y=\y,\tilde U_1=v_1,\tilde U_2=v_2,\\
  &\qquad\qquad\qquad T_1=j,T_2=k,G=\gamma]\\
  &=W^n(\z\vert\x,\y).
\end{align*}
\end{itemize}

Every $\gamma\in\Gamma$ corresponds to a deterministic code $C(\gamma)$ with average error at most $\lambda_\gamma$. For each of these codes, we can proceed as in \cite{WBBJ11}. That means that we first apply Fano's inequality and then obtain single-letter bounds on the code rates. More precisely, writing $\U=[1,n]\times[1,V_1]\times[1,V_2]$, we can construct for each $\gamma$ a probability distribution $p(u,x,y\vert \gamma)$ on $\U\times\X\times\Y$ which is contained in $\Pi$. This is due to the fact proved in \cite{Wi1} that conditional on $\gamma$ and $(\tilde U_1,\tilde U_2)$, the random variables $\tilde X$ and $\tilde Y$ are independent. Thus we have
\[
  p(u,x,y\vert\gamma)=p_0(u\vert\gamma)p_1(x\vert u,\gamma)p_2(y\vert u,\gamma).
\]
Further, for each $q\in\SB$, we construct the random vector $(U,X,Y,Z_q)$ which together with $G$ has the distribution
\begin{align}\label{Zgut}
  &\mathrel{\hphantom{=}}\PP[Z_q=z,Y=y,X=x,U=u,G=\gamma]\notag\\
  &=W(z\vert x,y\vert q)p(u,x,y\vert\gamma)p_G(\gamma).
\end{align}
By construction, this random vector satisfies for every $q\in\SB$
\begin{align*}
  \frac{1}{n}\log M_1&\leq C_1+I(Z_q;X\vert Y,U,G=\gamma)+\frac{1}{n}\Delta_\gamma,\\
  \frac{1}{n}\log M_2&\leq C_2+I(Z_q;Y\vert X,U,G=\gamma)+\frac{1}{n}\Delta_\gamma,\\
  \frac{1}{n}\log M_1M_2&\leq\{(C_1+C_2+I(Z_q;X,Y\vert U,G=\gamma))\\&\qquad\qquad\wedge I(Z_q;X,Y\vert G=\gamma)\}+\frac{1}{n}\Delta_\gamma,
\end{align*}
where 
\[
  \Delta_\gamma:=2h(2\lambda_\gamma)+4\lambda_\gamma\log M_1M_2.
\]
Next we take the expectation over $G$. Using the concavity of $h$ and \eqref{Zgut}, this can be transformed into
\begin{align}
  \frac{1}{n}\log M_1&\leq C_1+I(Z_q;X\vert Y, U,G)+\frac{1}{n}\Delta,\\
  \frac{1}{n}\log M_2&\leq C_2+I(Z_q;Y\vert X, U,G)+\frac{1}{n}\Delta,\\
  \frac{1}{n}\log M_1M_2&\leq\{(C_1+C_2+I(Z_q;X,Y\vert U,G))\notag\\&\qquad\qquad\wedge I(Z_q;X,Y\vert G)\}+\frac{1}{n}\Delta,\label{wrgstf}
\end{align}
with 
\[
  \Delta:=2h(2\lambda)+4\lambda\log M_1M_2.
\]
As $I(Z_q;X,Y\vert G)=H(Z_q\vert G)-H(Z_q\vert X,Y)$, the concavity of entropy implies that the bound in \eqref{wrgstf} is relaxed if one replaces $I(Z_q;X,Y\vert G)$ by $I(Z_q;X,Y)$. We now set $\hat U:=(U,G)$ and observe that the distribution of $(\hat U,X,Y)$ is contained in $\Pi$. Comparing the resulting set of inequalities with a valid one among \eqref{erstegute}-\eqref{letztegute} and using the same simple arguments as in \cite{WBBJ11}, we can now show that $\lambda\geq\lambda(\eps)>0$. This finishes the proof of the weak converse.

\section*{Acknowledgment}

The authors would like to thank Frans Willems for the fruitful discussions about this work at the Banff workshop ``Interactive Information Theory'' in January 2012. They would also like to thank the associate editor Yossef Steinberg 
for his valuable comments given during the review process of the paper.


\begin{biography}{Moritz Wiese}
  (S'09) received the Dipl.-Math. degree in mathematics from the university of Bonn, Germany, in 2007. He has been pursuing the PhD degree since then. From 2007 to 2010, he was a research assistant at the Heinrich-Hertz-Lehrstuhl f\"ur Mobilkommunikation, Technische Universit\"at Berlin, Germany. Since 2010, he is a research and teaching assistant at the Lehrstuhl f\"ur Theoretische Informationstechnik, Technische Universit\"at M\"unchen, Munich, Germany.
\end{biography}

\begin{biography}{Holger Boche}
(M'04-SM'07-F'11) received the Dipl.-Ing. and Dr.-Ing. degrees in electrical engineering from the Technische Universit\"at Dresden, Dresden, Germany, in 1990 and 1994, respectively. He graduated in mathematics from the Technische Universit\"at Dresden in 1992. From 1994 to 1997, he did postgraduate studies in mathematics at the Friedrich-Schiller Universit\"at Jena, Jena, Germany. He received his Dr. rer. nat. degree in pure mathematics from the Technische Universit\"at Berlin, Berlin, Germany, in 1998. In 1997, he joined the Heinrich-Hertz-Institut (HHI) f\"ur Nachrichtentechnik Berlin, Berlin, Germany. Starting in 2002, he was a Full Professor for mobile communication networks with the Institute for Communications Systems, Technische Universit\"at Berlin. In 2003, he became Director of the Fraunhofer German-Sino Lab for Mobile Communications, Berlin, Germany, and in 2004 he became the Director of the Fraunhofer Institute for Telecommunications (HHI), Berlin, Germany. Since October 2010 he has been with the 
Institute of Theoretical Information Technology and Full Professor at the Technische Universit\"at M\"unchen, Munich, Germany. He was a Visiting Professor with the ETH Zurich, Zurich, Switzerland, during the 2004 and 2006 Winter terms, and with KTH Stockholm, Stockholm, Sweden, during the 2005 Summer term. Prof. Boche is a Member of IEEE Signal Processing Society SPCOM and SPTM Technical Committee. He was elected a Member of the German Academy of Sciences (Leopoldina) in 2008 and of the Berlin Brandenburg Academy of Sciences and Humanities in 2009. He received the Research Award "Technische Kommunikation" from the Alcatel SEL Foundation in October 2003, the "Innovation Award" from the Vodafone Foundation in June 2006, and the Gottfried Wilhelm Leibniz Prize from the Deutsche Forschungsgemeinschaft (German Research Foundation) in 2008. He was co-recipient of the 2006 IEEE Signal Processing Society Best Paper Award and recipient of the 2007 IEEE Signal Processing Society Best Paper Award.
\end{biography}

\end{document}